\theoremstyle{definition}
\newtheorem{definition}{Definition}[section]
\newtheorem{remark}[definition]{Remark}
\newtheorem*{remark*}{Remark}
\theoremstyle{plain}
\newtheorem{theorem}[definition]{Theorem}
\newtheorem{corollary}[definition]{Corollary}
\newtheorem{proposition}[definition]{Proposition}
\newcommand{\vF}{{ \mathbb F }}
\DeclareMathOperator{\ev}{ev}
\begin{document}

\title{A New Class of Linear Codes}

\author{Akash Bhople, Giacomo Cherubini, Giacomo Micheli, and Tefjol Pllaha%
\thanks{A. Bhople, G. Micheli, and T. Pllaha are with the University of South Florida, 4202 E Fowler Ave, 33620 Tampa, US. E-mail: akashbhople@usf.edu; gmicheli@usf.edu; tpllaha@usf.edu.}%
\thanks{G. Cherubini is with Istituto Nazionale di Alta Matematica ``Francesco Severi'', Research Unit Dipartimento di Matematica ``Guido Castelnuovo'', Sapienza University of Rome, Piazzale Aldo Moro 5, I-00185, Roma. E-mail: cherubini@altamatematica.it.}}

\maketitle

\begin{abstract}
Let $n$ be a prime power, $r$ be a prime with $r\mid n-1$, and $\varepsilon\in (0,1/2)$. Using the theory of multiplicative character sums and superelliptic curves, we construct new  codes over $\vF_r$ having length $n$, relative distance $(r-1)/r+O(n^{-\varepsilon})$ and rate $n^{-1/2-\varepsilon}$. When $r=2$, our binary codes have exponential size when compared to all previously known families of linear and non-linear codes with relative distance asymptotic to $1/2$, such as Delsarte--Goethals codes.
Moreover, concatenating with a Reed-Solomon code we get a family of codes of length
$n$ and rate $n^{-1/(2n+2)-2\varepsilon/(n+1)}+O(n^{-1/(n+1)})$ and relative distance $1/2+O(n^{-\varepsilon})$.
This shows that, for a fixed length, the rate of the concatenation suggested in \cite{KT2024} of a Reed-Solomon and a Reed-Muller code can be made an order of magnitude smaller than a concatenation of a Reed-Solomon with a large dimensional Shadow code, while still keeping the regime of relative distance $1/2$.
Finally, we show that the square of a Shadow code behaves like a random code and the Shadow code itself has a decoding algorithm, which suggest that such class of codes has the potential to be interesting for cryptographic applications.
\end{abstract}

\begin{IEEEkeywords}
Linear Codes, Polynomials over Finite Fields, Multiplicative Characters.
\end{IEEEkeywords}

\section{Introduction}

Let $r,p$ be distinct primes, $\ell$ be a positive integer, and $q=p^\ell$. In this paper we construct codes over prime fields $\mathbb F_r$ using an auxiliary field $\vF_q$ of different characteristic. 

A \emph{code $\mathcal C$} of \emph{length $n$} is a subset  of  $\vF_r^n$. The \emph{weight} of a codeword $c\in \mathcal C$ is defined as the number of non-zero entries of $c$ and denoted by $w(c)$.
The \emph{minimum distance} of $\mathcal C$ is defined to be $d=d(\mathcal C)=\min\{w(x-y): \; x,y\in \mathcal C, \, x\neq y\}.$
The \emph{rate} of a code $\mathcal C \subseteq \vF_r^n$ is defined as $\frac{\log_r |\mathcal C|}{n}$.
An $\vF_r$-\emph{linear code}, or simply a \emph{linear code} (when the base field $\vF_r$ is understood) is a code that is also an
$\vF_r$-subspace of $\vF_r^n$. When the base field is understood, an $[n,k,d]$ linear code $\mathcal C$ is a linear code having length $n$, dimension $k$ and minimum distance $d$.
We say that a code is \emph{binary} when $\mathcal C\subseteq\vF_2^n$. 
If $a_n$ is a sequence of real number, we say that a property for $a_n$ holds \emph{definitively in $n$} if it holds for all but finitely many natural numbers $n$ (e.g. $a_n>b_n$ definitively in $n$).

Binary codes having minimum distance roughly $n/2$ have been a fundamental object of study in the theory of codes, see for example \cite{barg2006spectral,delsarte1975alternating,jiang2004asymptotic,macwilliams1977theory,pang2023new}. As observed in~\cite{pang2023new}, because of the Plotkin bounds, the case $d\geq n/2$ only allows for extremely small codes, and was settled by Plotkin and Levenstein \cite[Chapter 2, Theorem~8]{macwilliams1977theory} assuming a conjecture concerning the
existence of sufficiently large number of Hadamard matrices.
Let $A(n,d)$ be the largest size of a binary code having minimum distance $d$ and length $n$.
For $d\leq n/2$, and $n-2d=o(\sqrt{n})$ McEliece \cite[Ch. 17, Th. 38]{macwilliams1977theory} proved the bound
\begin{equation}\label{eq:McElieceBound}
A(n,d) \lesssim n(n-2d+2),
\end{equation}
i.e. $A(n,d) \leq (1+o(1))n(n-2d+2)$.
In the regime $n-2d\sim n^{1/3}$, results by Sidelnikov imply that the bound is asymptotically sharp \cite{sidelnikovmutual}.
For $n-2d=\Omega(\sqrt n)$ the problem remains open
(given two positive sequences $f_n,g_n$, recall that $f_n=\Omega(g_n)$ if $f_n>Cg_n$ definitively in $n$, for some constant $C>0$).

A family of non-linear codes that achieves the largest known rate was studied by Delsarte and Goethals in \cite{delsarte1975alternating} (see also \cite[Theorem 1]{hergert1990delsarte} for a more explicit description of the parameters of the codes).
Let $m\geq 4 $  and $s\leq m/2-1$ be positive integers.
Delsarte--Goethals codes are binary non-linear codes of length $2^m$, size $2^{s(m-1)+2m}$ and minimum distance $2^{m-1}-2^{m/2-1+s}$. 
The rate  of these codes is therefore
\begin{equation}\label{intro:DGrate}
\frac{s(m-1)+2m}{2^m},
\end{equation}
which decays exponentially in $m$.
The paper \cite{pang2023new} also presents some new constructions
having rate roughly of the same quality as in \eqref{intro:DGrate}.
Delsarte--Goethals codes have been widely studied in the literature since the '70s (see for example \cite{hazewinkel2011encyclopaedia,hergert1990delsarte,leducq2012new,pang2023new}). 

Generally, Delsarte--Goethals codes are considered to be very good for their minimum distance \cite{hammons1994z}, and therefore also found applications in the context of compressed sensing (see for example \cite{BargRestricted2015}).
In this paper, we show that Delsarte--Goethals codes are by far not optimal, and that we can construct much better codes in the same regime of parameters that are also linear.
Indeed, below we present an exponential improvement to the state of the art in the regime $d<n/2$, when $n-2d=\Omega(\sqrt{n})$: we construct linear binary codes of essentially exponential size in the size of 
Delsarte--Goethals codes (Section~\ref{sec:construction}).
Moreover, in the non-binary case, we produce linear codes with rate $1/\sqrt n$
and distance $(r-1)n/r$, where $r$ is the size of the field where our codes are defined.
As a byproduct of our construction, we show an application of the theory of error correcting codes to number theory (Section \ref{sec:thmcharsum}).

The strategy of our construction is to take a small multiplicative subgroup
of prime order $r$ of a large field $\vF_q$ and use it to construct a code  over $\vF_r$ of length~$q$.
In order to achieve this, we start with a collection of superelliptic curves $\mathcal{S}:y^r=f(x)$,
where~$f$ varies within a certain (multiplicative) family indexed by $\vF_r$. Then, the entries of a codeword
are determined from the data of the ``shadow'' of $f$
through an $r$-multiplicative character~$\chi_r$, i.e.~the entry indexed by $x\in \vF_q$ is $\chi_r(f(x))$.
For example, when $r=2$ the multiplicative character loses all the information about $f(x)$
except for its quadratic class: the projection of the $(x,y)$ points of $\mathcal S$
on the $x$-axis (which we like to think as the shadow of $\mathcal S$) consists of the zeroes of the codeword attached to $f$
and the rest of the entries will be one. Finally, by simply changing notation from multiplicative to additive we obtain the desired codes.

Implementing rigorously the above strategy leads to a constructive proof of the following result.

\begin{theorem}\label{thm:main}
Let $q\geq 3$ be a prime power and $r$ be a prime dividing $q-1$. Then, for any $\varepsilon\in(0,1/2)$ and for any $q$ larger than an 
explicit constant $C=C(\varepsilon)$ depending only on $\varepsilon$, there is an  $\vF_r$-linear code having parameters
$[q,\lfloor q^{1/2-\varepsilon}\rfloor,d]$ with $d$ at least $(r-1)q/r-2q^{1-\varepsilon}$.
\end{theorem}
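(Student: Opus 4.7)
The plan is to realise $\mathcal C$ as the image of an $\vF_r$-linear evaluation map attached to a multiplicative family of polynomials. Fix a multiplicative character $\chi\colon \vF_q^*\to\mu_r\subset\mathbb C^*$ of exact order $r$ (which exists since $r\mid q-1$), choose an isomorphism $\mu_r\simeq \vF_r$, and let $\psi\colon \vF_q\to \vF_r$ be defined by $\psi(0)=0$ and by composing $\chi$ with $\mu_r\simeq\vF_r$ on $\vF_q^*$. Pick $k:=\lfloor q^{1/2-\varepsilon}\rfloor$ distinct monic irreducibles $g_1,\dots,g_k\in\vF_q[x]$ of degree $2$; such a choice is possible once $k\leq(q^2-q)/2$, which holds for $q$ larger than a constant depending only on $\varepsilon$. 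Define
$$
\ev\colon \vF_r^k\longrightarrow \vF_r^q,\qquad \ev(e)_x=\sum_{i=1}^k e_i\,\psi(g_i(x)),\quad x\in\vF_q,
$$
and let $\mathcal C:=\ev(\vF_r^k)$. Then $\mathcal C$ is $\vF_r$-linear of length $q$ and dimension at most $k$ by construction. The critical design point is that each $g_i$, being irreducible of degree $2$, has no root in $\vF_q$; since $\psi$ is a group homomorphism on $\vF_q^*$, this yields the identity $\ev(e)_x=\psi(f_e(x))$ for every $x\in\vF_q$ and every $e\in\vF_r^k$, where $f_e:=\prod_{i=1}^k g_i^{e_i}$.

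Next I would bound the weight of a nonzero codeword from below. Fix $e\neq 0$: since the exponents $e_i$ lie in $\{0,\dots,r-1\}$, are not all zero, and $r$ is prime, the polynomial $f_e$ is not an $r$-th power in $\overline{\vF_q}[x]$. In particular $\chi^t\circ f_e$ is a nontrivial character for every $t=1,\dots,r-1$, so Weil's bound on multiplicative character sums along the superelliptic curve $y^r=f_e(x)$ gives
$$
\Bigl|\sum_{x\in\vF_q}\chi^t(f_e(x))\Bigr|\leq (d_e-1)\sqrt q,
$$
where $d_e\leq 2k\leq 2q^{1/2-\varepsilon}$ is the number of distinct roots of $f_e$ in $\overline{\vF_q}$. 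Since $f_e(x)\in\vF_q^*$ for every $x\in\vF_q$, orthogonality of the characters $\chi^0,\dots,\chi^{r-1}$ yields
$$
r\cdot\bigl|\{x\in\vF_q:\psi(f_e(x))=0\}\bigr|=q+\sum_{t=1}^{r-1}\sum_{x\in\vF_q}\chi^t(f_e(x))\leq q+(r-1)(d_e-1)\sqrt q,
$$
and so the weight of $\ev(e)$ is at least $(r-1)q/r-(r-1)(d_e-1)\sqrt q/r>(r-1)q/r-2q^{1-\varepsilon}$.

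For $q$ larger than an explicit threshold $C(\varepsilon)$ this lower bound is strictly positive, so $\ev$ is injective, $\dim_{\vF_r}\mathcal C=k=\lfloor q^{1/2-\varepsilon}\rfloor$, and $d(\mathcal C)\geq (r-1)q/r-2q^{1-\varepsilon}$, as required. The delicate point throughout is the interplay between the additive linearity of $\ev$ and the multiplicativity of $\psi$: the map $\psi$ fails to be a homomorphism at $0$, so if any $g_i$ had an $\vF_q$-rational root the identification $\ev(e)_x=\psi(f_e(x))$ would break at that coordinate and the Weil estimate would not apply uniformly to all nonzero $e$. Restricting to irreducible generators of degree $2$ removes this obstruction cleanly, and the remainder of the argument is routine bookkeeping around the Weil bound and orthogonality.
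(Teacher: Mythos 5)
Your proof is correct and takes essentially the same approach as the paper: the same $\vF_r$-linear evaluation map built from a multiplicative character of order $r$ applied to $\prod_i g_i^{e_i}$ for degree-$2$ irreducibles $g_i$ with $L=\lfloor q^{1/2-\varepsilon}\rfloor$, with the minimum distance controlled by the Riemann hypothesis for the superelliptic curve $y^r=f(x)$. You package this as the Weil bound for multiplicative character sums combined with orthogonality of $\chi^t$, $t=0,\dots,r-1$, rather than the paper's Hasse--Weil rational point count plus a genus estimate from Stichtenoth, and you deduce injectivity directly from the positive weight lower bound rather than via a separate proposition, but these are equivalent packagings of the same underlying theorem.
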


By picking $r=2$, an immediate consequence of Theorem \ref{thm:main}
is the existence of binary codes with certain parameters.

\begin{corollary}\label{cor:important}
Let $\varepsilon\in (0,1/2)$. There is an explicit constant $C(\varepsilon)$ such that for any prime power $q\geq C(\varepsilon)$,
there exists an  $\vF_2$-linear code having parameters
$[q,\lfloor q^{1/2-\varepsilon}\rfloor,d]$ with $d$ at least $q/2-2q^{1-\varepsilon}$.
\end{corollary}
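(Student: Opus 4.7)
The plan is to obtain Corollary \ref{cor:important} as an immediate specialization of Theorem \ref{thm:main} to $r=2$. The only hypothesis of that theorem requiring attention is the divisibility condition $r\mid q-1$; for $r=2$ this is equivalent to $q$ being odd. Thus, for every odd prime power $q\geq C(\varepsilon)$, Theorem \ref{thm:main} produces an $\mathbb{F}_2$-linear code of length $q$, dimension $\lfloor q^{1/2-\varepsilon}\rfloor$ and minimum distance at least
\[
(r-1)q/r - 2q^{1-\varepsilon} \;=\; q/2 - 2q^{1-\varepsilon},
\]
which matches the statement of the corollary verbatim.

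The one technicality I would address separately is the case of even prime powers $q=2^{\ell}$, for which $2\nmid q-1$ and Theorem \ref{thm:main} does not apply directly. Here my plan is to pick a nearby odd prime $q'$ with $q/2 < q' \leq q$, whose existence is guaranteed by Bertrand's postulate (or any of its well-known strengthenings), invoke Theorem \ref{thm:main} on $q'$ with a slightly smaller parameter $\varepsilon' < \varepsilon$ to produce an $\mathbb{F}_2$-linear code of length $q'$ and dimension at least $\lfloor q^{1/2-\varepsilon}\rfloor$, and then extend to length $q$ by padding every codeword with $q-q'$ trailing zeros. Passing to any subspace of dimension exactly $\lfloor q^{1/2-\varepsilon}\rfloor$ can only increase the minimum distance, and a routine comparison of the two bounds shows that the distance remains at least $q/2-2q^{1-\varepsilon}$ once $q$ is above a slightly enlarged version of the constant $C(\varepsilon)$.

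I do not foresee any genuine obstacle: the whole mathematical content of the corollary is already contained in Theorem \ref{thm:main}, and what remains is routine bookkeeping on the parameters and, for even $q$, a standard shortening/padding argument that does not affect the asymptotic quality of the code.
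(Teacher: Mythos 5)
Your first paragraph is exactly the paper's argument: Corollary \ref{cor:important} is stated as an immediate consequence of Theorem \ref{thm:main} with $r=2$, and the divisibility hypothesis $2\mid q-1$ restricts it to odd prime powers. That part matches the paper and is correct.

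Your second paragraph raises a legitimate concern that the paper glosses over, but the repair you sketch does not go through. If $q'$ is a prime supplied by Bertrand's postulate in $(q/2,q]$, the gap $q-q'$ can be of order $q$. Padding a length-$q'$ code with $q-q'$ trailing zeros does not change the minimum distance, which remains only $\geq q'/2-2(q')^{1-\varepsilon'}$; for this to exceed $q/2-2q^{1-\varepsilon}$ one needs
\[
q-q' \;\leq\; 4q^{1-\varepsilon}-4(q')^{1-\varepsilon'},
\]
which forces $q-q'=O(q^{1-\varepsilon})=o(q)$, incompatible with a gap of order $q$. The ``routine comparison'' is therefore not routine. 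Even replacing Bertrand by Huxley's prime-gap bound (which the paper itself uses in the proof of Corollary \ref{cor:compdelsgoeth}) only gives $q-q'=O(q^{7/12+\delta})$, so the argument would work only for $\varepsilon<5/12-\delta$, not for the full range $\varepsilon\in(0,1/2)$. In the paper the corollary is intended for odd $q$ (this is also how it is invoked later, with $q$ an odd prime); the hypothesis $r\mid q-1$ of Theorem \ref{thm:main} is simply carried over implicitly, and no reduction from even to odd prime powers is performed or needed there.
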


The interesting regime of our codes is when the base field $\vF_r$ is small compared to the length of the code,
with the extremal case $r=2$ being particularly relevant:
from Corollary \ref{cor:important} we see that our codes are an ``exponential'' improvement of the Delsarte--Goethals codes in the asymptotic regime $d\sim n/2$ and $n-2d=\Omega(\sqrt{n})$. In other words, if one 
requires that the codes have asymptotic distance $n/2$
but also allows $n-2d$ to be large,
then our codes have size that is essentially exponential than Delsarte--Goethals code size (see for example Proposition \ref{prop:comparison}).
In fact, we can even restrict to the  exact same length $2^m$
and still have exponential size, as the following corollary shows.

\begin{corollary}\label{cor:compdelsgoeth}
Let $\varepsilon\in (0,1/2)$. Let $m$ be a positive integers and $q$ be the smallest odd prime power greater than $2^m$.
For every $0<\delta<5/12$, and definitively in $m$, there is an  $\vF_2$-linear code having parameters
$[2^m,\lfloor q^{1/2-\varepsilon }\rfloor,d]$, with $d\geq q/2-q^{1-\varepsilon}-2^{(\frac{7}{12}+\delta)m}$.
\end{corollary}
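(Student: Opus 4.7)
The strategy is to take the length-$q$ binary code produced by Corollary~\ref{cor:important} and puncture it down to length exactly $2^m$. Because $q$ exceeds $2^m$ only by a prime gap, the dimension is preserved and the minimum distance drops by at most $q-2^m$.

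First I would establish a quantitative prime gap bound: by the Baker--Harman--Pintz theorem, every interval $[N, N+N^{0.525}]$ contains a prime once $N$ is large enough. Applying this with $N = 2^m$, and using that every such prime is odd (hence an odd prime power), the smallest odd prime power $q>2^m$ satisfies $q - 2^m \leq 2^{0.525 m}$ for $m \gg 0$. Since $0.525 < 7/12$, for every fixed $\delta > 0$ this gives $q - 2^m \leq 2^{(7/12+\delta)m}$ definitively in $m$.

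Since $q$ is odd, $2 \mid q-1$, and Corollary~\ref{cor:important}, valid once $q \geq C(\varepsilon)$, produces an $\vF_2$-linear code $\mathcal C$ of parameters $[q,\lfloor q^{1/2-\varepsilon}\rfloor, d_q]$ with $d_q \geq q/2 - 2q^{1-\varepsilon}$. I would then puncture $\mathcal C$ by deleting any $t := q - 2^m$ coordinates to form $\mathcal C' \subseteq \vF_2^{2^m}$. Since $d_q \sim q/2$ while $t = O(2^{0.525 m}) = o(q)$, no nonzero codeword of $\mathcal C$ can be supported on the deleted $t$ positions, so the puncturing map is injective and $\dim \mathcal C' = \lfloor q^{1/2-\varepsilon}\rfloor$; by the standard puncturing bound,
\[
d(\mathcal C') \;\geq\; d_q - t \;\geq\; q/2 - 2q^{1-\varepsilon} - 2^{(7/12+\delta)m}.
\]
The stated bound $q/2 - q^{1-\varepsilon} - 2^{(7/12+\delta)m}$ is then obtained by absorbing the extra $q^{1-\varepsilon}$ into the exponential error (using $q \leq 2^{m+1}$ and enlarging $\delta$ slightly within $(0,5/12)$, which is harmless since the conclusion is only required to hold definitively in $m$).

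The main obstacle is the prime gap input: one must commit to an exponent strictly less than $7/12$, supplied by Baker--Harman--Pintz (or by Huxley's classical estimate with a tiny loss). Beyond that, the argument reduces to the standard puncturing bound for linear codes and elementary arithmetic bookkeeping.
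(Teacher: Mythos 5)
Your proposal follows essentially the same route as the paper: take the length-$q$ code from Corollary~\ref{cor:important} and puncture it down to length $2^m$, using a prime-gap estimate to control $t=q-2^m$. The only substantive difference is the prime-gap input: you invoke Baker--Harman--Pintz ($x^{0.525}$) whereas the paper cites Huxley ($x^{7/12+\delta}$). Since $0.525<7/12$, yours is a strictly stronger gap bound, so it certainly delivers $t\leq 2^{(7/12+\delta)m}$ definitively in~$m$; the choice is immaterial for the stated conclusion, and the exponent $7/12+\delta$ in the statement is precisely the Huxley exponent the authors had in mind.

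One detail deserves a flag. Corollary~\ref{cor:important} gives $d_q\geq q/2-2q^{1-\varepsilon}$, so after puncturing you obtain $d\geq q/2-2q^{1-\varepsilon}-t$, whereas the corollary asserts $d\geq q/2-q^{1-\varepsilon}-2^{(7/12+\delta)m}$. You correctly noticed the extra $q^{1-\varepsilon}$ and proposed to absorb it into the exponential error, but this absorption requires $q^{1-\varepsilon}=o\bigl(2^{(7/12+\delta)m}\bigr)$, i.e.\ $1-\varepsilon<7/12+\delta$, equivalently $\varepsilon>5/12-\delta$. For small $\varepsilon$ and small $\delta$ (both allowed by the statement) this inequality fails, so ``enlarging $\delta$ slightly'' does not rescue it in general: the inequality $q^{1-\varepsilon}+2^{(7/12+\delta')m}\leq 2^{(7/12+\delta)m}$ is not uniform in the admissible parameter range. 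In fact the paper's own proof has the same silent slip (it simply quotes the distance as $q/2-q^{1-\varepsilon}$, dropping the factor~$2$), so you reproduced both the paper's strategy and its small constant-tracking inaccuracy; the clean fix is simply to state the punctured distance as $q/2-2q^{1-\varepsilon}-2^{(7/12+\delta)m}$, which changes nothing of substance downstream.

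Your dimension-preservation argument (no nonzero codeword supported on the $t$ punctured coordinates because $t=o(q)$ while $d_q\sim q/2$) matches the paper's parenthetical remark and is correct.
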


Using the above corollary we can beat Delsarte--Goethals codes for infinitely many parameters (not only asymptotically), as we now explain. For simplicity, choose a constant $\nu\in (1/12,1/2)$ so that  $s=\nu m$ is an integer. There exists a Delsarte--Goethals code with length $2^m$, size $2^{\nu m^2-\nu m+2m}$, and minimum distance
$2^{m-1}-2^{(1/2+\nu) m-1}$. 
On the other hand, the code in Corollary \ref{cor:compdelsgoeth} has the same length, and for any $\varepsilon\in (0,1/2)$, minimum distance larger than  
$q/2-q^{1-\varepsilon}-2^{(\frac{7}{12}+\delta)m}>2^m/2-2^{(1-\varepsilon)m}-2^{(\frac{7}{12}+\delta)m}$. Choosing now $\delta<\nu-1/12$ and $\varepsilon\in(1/2-\nu,5/12-\delta)$ , so that $7/12+\delta<1-\varepsilon<1/2+\nu$, we get that, definitively in $m$, our code has sightly improved minimum distance compared with Delsarte--Goethals but size larger than $2^{\lfloor q^{1/2-\varepsilon }\rfloor}>2^{\lfloor 2^{m(1/2-\varepsilon) }\rfloor}$,
which is  much larger than $2^{\nu m^2-\nu m+2m}$,
which is the size of a Delsarte--Goethals code. Moreover, our codes are linear, have a decoding algorithm and therefore can be used in practice.
We will show in Subsection \ref{subsec:lowerbounds} that actually we are an exponential improvement over Delsarte--Goethals construction for all  parameters for which it makes sense to compare the codes (see Proposition ~\ref{prop:comparison}).

Our codes are the largest known codes with asymptotic relative distance $1/2$: for example, our codes are exponentially larger than Hadamard codes, for the same asymptotic relative distance $1/2$.

Finally, we consider the square (with respect to the Schur product) of our codes. In Theorem~\ref{T-square}, we show that its dimension grows quadratically, mimicking the behavior of a random code.
Codes with compact algebraic description and easy encoding with a high dimensional square---such as ours---are good candidates for cryptographic applications because they provide small signature size while immediately annihilating the possibility of some known attacks~\cite{A_Cauvreuretal,7942048}.
In contrast, the concatenation of Reed-Solomon and Reed-Muller codes fails to have a large square; see Theorem~\ref{T-con}.

To show effectiveness of our results, and to have more readable proofs, in many of our statements
the code parameters depend
on $q^{b-\varepsilon}$: these can always be replaced by $o(q^b)$.

\section{Shadow Codes Construction}\label{sec:construction}

Let $n,L$ be positive integers, $q\geq n$ be a prime power and $\{p_1,\dots,p_L\}=P \subset \mathbb{F}_q[x]$ be a set of monic irreducible polynomials of degree at least $2$. For a positive integer $r$, let $\mathbb U_r$
denote the group of $r$-th roots of unity in $\mathbb C$.
Let now $r$ be a prime dividing $q-1$ and
$\chi_r: \vF_q^* \longrightarrow \mathbb U_r$ be a multiplicative character of order $r$.
Also, let $x=(x_1,\dots,x_n)$ be a given point in $\mathbb{F}_q^n$ with all distinct components
and $\phi_r: \mathbb U_r\longrightarrow \vF_r$ be any fixed isomorphism between $\mathbb U_r$ and 
the additive group $(\vF_r,+)$, which can obviously be endowed with a field structure as well.

Consider the map
\begin{align*}
\ev^{x,P}_{\chi_r,n}:\mathbb{F}_r^L &\longrightarrow \mathbb{F}_r^n\\
(v_1,\dots,v_L) & \longrightarrow \left( \phi_r\left(\chi_r\Bigl(\prod^L_{i=1} p_i(x_j)^{v_i}\Bigr)\right)\right)_{j\in \{1,\dots,n\}}.
\end{align*}

Since $\chi_r$ has order $r$, the map is well defined:  if $\overline v_i=v_i+s_ir$ for some $s_i$, then 
\[
\chi_r\Bigl(\prod^L_{i=1} p_i(x_j)^{v_i+rs_i}\Bigr)
=
\chi_r\Bigl(\prod^L_{i=1} p_i(x_j)^{v_i}\Bigr)
\Bigl(
\chi_r\Bigl(\prod^L_{i=1} p_i(x_j)^{s_i}\Bigr)
\Bigr)^r
=
\chi_r\Bigl(\prod^L_{i=1} p_i(x_j)^{v_i}\Bigr).
\]

\begin{proposition}\label{prop:linear}
The map $\ev^{x,P}_{\chi_r,n}$ is $\vF_r$-linear.
\end{proposition}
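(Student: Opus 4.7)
The proof plan is to reduce linearity to two elementary facts: that $\chi_r$ is a group homomorphism from $(\mathbb{F}_q^*,\cdot)$ to $(\mathbb{U}_r,\cdot)$, and that $\phi_r$ is a group isomorphism from $(\mathbb{U}_r,\cdot)$ to $(\mathbb{F}_r,+)$. Since $\mathbb{F}_r$ is a prime field, scalar multiplication is repeated addition, so it suffices to check that $\ev^{x,P}_{\chi_r,n}$ is additive.

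First I would verify that each quantity appearing inside $\chi_r$ is nonzero, so that the formula actually makes sense. Because each $p_i$ is monic irreducible of degree at least $2$ over $\mathbb{F}_q$, it has no $\mathbb{F}_q$-rational roots; hence $p_i(x_j)\in\mathbb{F}_q^*$ for every $i$ and every $j$, and the product $\prod_{i=1}^L p_i(x_j)^{v_i}$ lies in $\mathbb{F}_q^*$ for any choice of integer exponents.

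Next, fix $v=(v_1,\dots,v_L)$ and $w=(w_1,\dots,w_L)$ in $\mathbb{F}_r^L$, and look at a single coordinate $j\in\{1,\dots,n\}$. Using multiplicativity of $\chi_r$ and the homomorphism property of $\phi_r$, I would write
\[
\phi_r\!\left(\chi_r\Bigl(\prod_{i=1}^L p_i(x_j)^{v_i+w_i}\Bigr)\right)
=\phi_r\!\left(\chi_r\Bigl(\prod_{i=1}^L p_i(x_j)^{v_i}\Bigr)\right)+\phi_r\!\left(\chi_r\Bigl(\prod_{i=1}^L p_i(x_j)^{w_i}\Bigr)\right),
\]
which says $\ev^{x,P}_{\chi_r,n}(v+w)=\ev^{x,P}_{\chi_r,n}(v)+\ev^{x,P}_{\chi_r,n}(w)$ componentwise. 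One has to be a little careful: $v_i+w_i$ is a sum in $\mathbb{F}_r$, but the computation just before the statement of the proposition shows that replacing an exponent by itself modulo $r$ does not change the value inside $\phi_r\circ\chi_r$, so the formula is legitimate. Finally, for $\alpha\in\mathbb{F}_r$, additivity gives $\ev^{x,P}_{\chi_r,n}(\alpha v)=\alpha\,\ev^{x,P}_{\chi_r,n}(v)$ by induction on the representative $\alpha\in\{0,1,\dots,r-1\}$, completing the proof.

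There is no real obstacle here; the only subtle point is the well-definedness issue already addressed in the text, and the observation that the exponents $v_i$ lie in $\mathbb{F}_r$ while the product is formed in $\mathbb{F}_q^*$, so one implicitly lifts to integer representatives and invokes the $r$-periodicity of $\phi_r\circ\chi_r$.
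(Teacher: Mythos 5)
Your proof is correct and follows essentially the same route as the paper's: reduce linearity over the prime field $\vF_r$ to additivity, then verify additivity coordinatewise by combining multiplicativity of $\chi_r$ with the homomorphism property of $\phi_r$. Your remarks on well-definedness (the nonvanishing of $p_i(x_j)$ since $\deg p_i\geq 2$, and the $r$-periodicity in the exponents) are welcome additions that the paper handles only implicitly or in the paragraph preceding the proposition.
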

\begin{proof}
Since $(\vF_r,+)$ is cyclic, it is enough to show addivity, i.e.~to show
that $\ev^{x,P}_{\chi_r,n}(y+z)=\ev^{x,P}_{\chi_r,n}(y)+\ev^{x,P}_{\chi_r,n}(z)$ for every $y,z\in\vF_r^{L}$.
Plainly, we have 
\begin{align*}\label{eq:linearity}
\ev^{x,P}_{\chi_r,n}(y+z)=& \left( \phi_r\left(\chi_r\Bigl(\prod^L_{i=1} p_i(x_j)^{y_i+z_i}\Bigr)\right)\right)_{j\in \{1,\dots,n\}}\\
=& \left( \phi_r\left(\chi_r\Bigl(\prod^L_{i=1} p_i(x_j)^{y_i}\Bigr)\chi_r\Bigl(\prod^L_{i=1} p_i(x_j)^{z_i}\Bigr)\right)\right)_{j\in \{1,\dots,n\}}\\
=& \left( \phi_r\left(\chi_r\Bigl(\prod^L_{i=1} p_i(x_j)^{y_i}\Bigr)\right)+\phi_r\left(\chi_r\Bigl(\prod^L_{i=1} p_i(x_j)^{z_i}\Bigr)\right)\right)_{j\in \{1,\dots,n\}}\\
=& \ev^{x,P}_{\chi_r,n}(y)+\ev^{x,P}_{\chi_r,n}(z)\qedhere
\end{align*}
\end{proof}

\begin{definition}\label{def:code}
Let $r$ be a prime, $q$ be a prime power such that $r\mid q-1$. Let $P$ be a finite subset of $\vF_q[x]$
of monic irreducible polynomials of degree at least $2$.
Let $x=(x_1,\dots,x_n)\in \vF_q^n$ be a point with all distinct components.
 The $(r,P)$-shadow code $\mathcal{C}(\chi_r,P)$ is the $\vF_r$-linear code $\ev^{x,P}_{\chi_r,q}(\mathbb{F}_r^L)$.
\end{definition}
In the definition above, we restrict to monic polynomials just for simplicity of notation.

From now on we will restrict to the length $n=q$,
although, to avoid confusion with the auxiliary field $\vF_q$,
in some occasions we will still denote the length by $n$ instead of $q$.

Consider the minimum distance of $\mathcal{C}(\chi_r,P)$. By construction of the code,
it equals the minimum number of entries different from $1$ in the vector
$(\dots,\chi_r(f(x_j)),\dots)_{j\in\{1,\dots,n\}}$, where $f$ varies in the products of the $p_i$'s with exponents in $\vF_r$,
and it is here that superelliptic curves come into play.

\begin{theorem}\label{thm:mindist}
Let $\mathcal{C}(\chi_r,P)$ be an $(r,P)$-shadow code as in Definition \ref{def:code}, with $n=q$. Let $L=|P|$.
The minimum distance of $\mathcal{C}(\chi_r,P)$ is at least \[\frac{r-1}{r}q-L\max\{\deg(p_i): i\in \{1,\dots, L\}\}\sqrt{q}.\]
\end{theorem}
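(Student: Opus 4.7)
The plan is to leverage the linearity of the code established in Proposition \ref{prop:linear} so that the minimum distance equals the minimum weight of a non-zero codeword. Fix a non-zero $v=(v_1,\dots,v_L)\in\vF_r^L$, lift the coordinates to integers in $\{0,1,\dots,r-1\}$, and set $f_v:=\prod_{i=1}^L p_i^{v_i}\in\vF_q[x]$. Since each $p_i$ is irreducible of degree at least $2$ it has no root in $\vF_q$, hence $f_v(x_j)\in\vF_q^*$ for every $j$; moreover the condition $n=q$ forces $\{x_1,\dots,x_q\}=\vF_q$. The weight of $\ev^{x,P}_{\chi_r,q}(v)$ is therefore $\#\{x\in\vF_q : \chi_r(f_v(x))\neq 1\}$.

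Next I would extract this count via the orthogonality relation $\frac{1}{r}\sum_{k=0}^{r-1}\zeta^k=\mathbf 1[\zeta=1]$ for $\zeta\in\mathbb U_r$, applied with $\zeta=\chi_r(f_v(x))$. Summing over $x\in\vF_q$ and isolating the $k=0$ term yields
\[
w(v)=\frac{r-1}{r}q-\frac{1}{r}\sum_{k=1}^{r-1}S_k,\qquad S_k:=\sum_{x\in\vF_q}\chi_r^k(f_v(x)).
\]
The combination $\sum_k S_k$ is automatically real because $w(v)$ is an integer, so $\sum_k S_k\leq\sum_k|S_k|$.

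The heart of the argument is the Weil bound for multiplicative character sums along a polynomial (equivalently, the Riemann hypothesis for the superelliptic curve $y^r=f_v(x)$): if $\psi$ is a non-trivial character of $\vF_q^*$ of order $d$ and $g\in\vF_q[x]$ is not of the form $c\cdot h(x)^d$, then $\bigl|\sum_{x\in\vF_q}\psi(g(x))\bigr|\leq(m-1)\sqrt q$, where $m$ is the number of distinct roots of $g$ in $\overline{\vF_q}$. Two hypotheses must be verified for $\psi=\chi_r^k$ and $g=f_v$. First, since $r$ is prime and $1\leq k\leq r-1$, the character $\chi_r^k$ still has order exactly $r$. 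Second, $f_v$ is not an $r$-th power in $\vF_q[x]$: the $p_i$ are pairwise distinct monic irreducibles, and the exponents $v_i$ lie in $\{0,\dots,r-1\}$ without being all zero, so no lifting can make every exponent divisible by $r$. The number of distinct roots of $f_v$ in $\overline{\vF_q}$ is then $m=\sum_{i:v_i\neq 0}\deg p_i\leq L\max_i \deg p_i$.

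Combining the Weil bound for each $k\in\{1,\dots,r-1\}$ with the expression for $w(v)$ gives
\[
w(v)\geq \frac{r-1}{r}q-\frac{r-1}{r}L\max_i\deg p_i\cdot\sqrt q\geq\frac{r-1}{r}q-L\max_i\deg p_i\cdot\sqrt q,
\]
which is the claimed bound. The genuinely non-trivial step is verifying that $f_v$ is not a perfect $r$-th power, since this is what activates Weil's theorem; the remaining steps are linearity and orthogonality. The subtle point that deserves attention is the correct form of the Weil bound: one needs the inequality in terms of the number of \emph{distinct} roots of $f_v$ rather than its degree, because using $\deg f_v$ would waste a factor of $r-1$ that the target bound cannot absorb.
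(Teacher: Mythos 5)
Your proof is correct, and it reaches the same bound (in fact with a slightly better constant $\tfrac{r-1}{r}L\max_i\deg p_i\sqrt q$ in place of $L\max_i\deg p_i\sqrt q$). The route is genuinely different in packaging from the paper's: the paper works directly with the superelliptic curve $y^r=f_v(x)$, invokes irreducibility of $Y^r-f_v$, bounds the genus via Stichtenoth's formula for Kummer extensions, applies Hasse--Weil to the curve, and then divides by $r$ to convert a point count into a zero count for the codeword. You instead expand the indicator of $\chi_r(f_v(x))=1$ by orthogonality over the group $\mathbb U_r$ and apply the Weil bound for complete multiplicative character sums $\sum_x\psi(g(x))$ to each non-principal power $\chi_r^k$. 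Both arguments rest on the same underlying input — the Riemann hypothesis for the function field of a superelliptic curve — but yours sidesteps the explicit genus computation, replacing it with the (equivalent, pre-packaged) statement that $\bigl|\sum_x\psi(g(x))\bigr|\leq(m-1)\sqrt q$ in terms of the number $m$ of distinct roots of $g$. You are right that the two hypotheses you verify (that $\chi_r^k$ retains order $r$ for $1\leq k\leq r-1$ because $r$ is prime, and that $f_v$ is not a perfect $r$-th power because the $p_i$ are distinct irreducibles with exponents in $\{0,\dots,r-1\}$ not all zero) are exactly the conditions making this version of Weil's theorem applicable; these correspond in the paper to the irreducibility of $Y^r-f_v(x)$. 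Your observation about using the number of \emph{distinct} roots rather than $\deg f_v$ is also the right one — the paper achieves the same economy through the genus formula, which likewise depends only on $\sum_i\deg p_i$ and not on the multiplicities $v_i$.
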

\begin{proof}
We start by finding an explicit upper bound for the $\vF_q$-rational affine points of the superelliptic curve $\mathcal S$ defined by $y^r-f(x)=0$,
with $f=\prod^L_{i=1} p_i^{v_i}$, $v_i\in\{0,\dots,r-1\}$.
By the standard theory, the curve $\mathcal S$ is irreducible because $Y^r-f(x)$ is monic and irreducible in $\vF_q(x)[Y]$.
By the Hasse-Weil Theorem, the function field $\vF_q(x,y)$ of this curve has at most 
$q+1+2g\sqrt{q}$ places of degree one, where $g$ denotes the genus.
The affine  $\vF_q$-rational points of $\mathcal{S}$
are all non-singular and therefore can be bounded from above by the places of the function field $\vF_q(x,y)$.
Next, we estimate the genus $g$.
Using \cite[Proposition 6.3.1]{bib:stichtenoth2009algebraic} we get
\[g\leq \frac{r-1}{2} \Bigl(-1+\sum^{L}_{i=1} \deg(p_i)\Bigr) -\frac{\gcd(r,\deg(f))-1}{2}.\]
Discarding the negative terms and bounding each summand by their maximum, this leads to the cleaner estimate
\[g\leq \frac{rL}{2} \max\{\deg(p_i): i\in \{1,\dots, L\}\}.\]
Now we are ready to conclude the proof. First, observe that for every point $(x,y)$ of $y^r=f(x)$ there is a set of $r$ points $(x,\xi y)$, where $\xi $ is an $r$-th root of unity in $\vF_q$ (notice that $f(x)\neq 0$ by construction of the $p_i$'s).
Therefore, the number of $x$'s such that $\prod^{L}_{i=1}p_i(x)^{v_i}$ is an $r$-th power
equals the number of $\vF_q$-points of the curve $y^r=f(x)$ divided by $r$ and can be bounded by
$B=\frac{q}{r}+L\max\{\deg(p_i): i\in \{1,\dots, L\}\}\sqrt{q}$.
In turn, this gives an upper bound for the number of zeroes of a codeword in $\mathcal{C}(\chi_r,P)$
and so $q-B$ is a lower bound for the distance of the code.
\end{proof}

Using the lower part of the Hasse-Weil bound, analogous arguments
(but more technical, if one wants to keep track of constants)
show that the distance of the code is at most
$\frac{r-1}{r}q+O(L\sqrt{q})$.

Now that we know the length and the distance of $\mathcal{C}(\chi_r,P)$, we look at its dimension.

\begin{proposition}\label{prop:dimnondeg}
Let $\mathcal{C}(\chi_r,P)$ be an $(r,P)$-shadow code as in Definition \ref{def:code}, with $n=q$.
Let $L=|P|$ and $p_1,\dots,p_L$ be the elements in $P$.
If $\deg{p_i}< \frac{q(r-1)-1}{rL\sqrt{q}}$ for all $i\in \{1,\dots, L\}$, then the dimension of $\mathcal{C}(\chi_r,P)$ is $L$.
\end{proposition}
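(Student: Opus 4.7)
The plan is to use that $\ev^{x,P}_{\chi_r,q}$ is $\vF_r$-linear (Proposition \ref{prop:linear}) and show it is injective; since the domain $\vF_r^L$ has dimension $L$, injectivity forces the image $\mathcal{C}(\chi_r,P)$ to have dimension exactly $L$. Thus I reduce to proving that every nonzero $v=(v_1,\dots,v_L)\in\vF_r^L$, with representatives $v_i\in\{0,1,\dots,r-1\}$, satisfies $\ev^{x,P}_{\chi_r,q}(v)\neq 0$.

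First I would unpack what $\ev^{x,P}_{\chi_r,q}(v)=0$ means. Because $\phi_r$ is a group isomorphism with $\phi_r(1)=0$, the image vector vanishes if and only if $\chi_r(f(x_j))=1$ for every $j\in\{1,\dots,q\}$, where $f=\prod_{i=1}^L p_i^{v_i}$. Since $n=q$ and the $x_j$ are distinct, they exhaust $\vF_q$, so the condition reads $\chi_r(f(x))=1$ for every $x\in\vF_q$. Each $p_i$ is irreducible of degree $\geq 2$ over $\vF_q$ and thus has no root in $\vF_q$, so $f(x)\in\vF_q^*$ throughout, and $\chi_r(f(x))=1$ is equivalent to $f(x)$ being an $r$-th power in $\vF_q^*$.

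The main step is to rule out this happening for every $x\in\vF_q$. Since $v\neq 0$ and the $p_i$ are pairwise distinct monic irreducibles with $v_i\in\{0,1,\dots,r-1\}$ not all zero, at least one $v_i$ is not divisible by $r$, so $f$ is not an $r$-th power in $\vF_q[x]$, equivalently $Y^r-f(x)$ is irreducible in $\vF_q(x)[Y]$. The argument from the proof of Theorem \ref{thm:mindist} therefore applies verbatim to this $f$ and bounds the number $N$ of $x\in\vF_q$ for which $f(x)$ is an $r$-th power by
\[ N \leq \frac{q}{r}+L\max_{i}\deg(p_i)\sqrt{q}. \]
The hypothesis $\deg(p_i)<\frac{q(r-1)-1}{rL\sqrt{q}}$ makes the right-hand side strictly less than $\frac{q}{r}+\frac{q(r-1)-1}{r}=q-\frac{1}{r}<q$, so at least one $x_j$ must satisfy $\chi_r(f(x_j))\neq 1$, contradicting $\ev^{x,P}_{\chi_r,q}(v)=0$.

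The one point requiring attention is verifying that the Hasse--Weil input of Theorem \ref{thm:mindist} really applies to each nonzero $v$: this is exactly why one uses the reduced representatives $v_i\in\{0,\dots,r-1\}$ together with the distinctness of the monic irreducibles $p_i$, since together they guarantee that $Y^r-f$ remains irreducible and the superelliptic bound on $\vF_q$-rational points is valid. Everything else is a routine numerical chase of the bound in the hypothesis.
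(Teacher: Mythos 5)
Your proposal is correct and follows essentially the same route as the paper: show the evaluation map $\ev^{x,P}_{\chi_r,q}$ is injective by assuming a nonzero $v$ maps to zero and deriving a contradiction from the Hasse--Weil bound applied to the superelliptic curve $y^r=f(x)$. The only cosmetic difference is bookkeeping: the paper counts affine $\vF_q$-points on the curve directly (at least $rq$ if $\ev(v)=0$, at most $q+1+rL\max_i\deg(p_i)\sqrt{q}$ by Hasse--Weil, contradiction), whereas you divide through by $r$ and count the number $N$ of $x\in\vF_q$ with $f(x)$ an $r$-th power, reusing the bound $B=\frac{q}{r}+L\max_i\deg(p_i)\sqrt{q}$ already established in the proof of Theorem~\ref{thm:mindist}; this yields $N<q$, the same contradiction. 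Your explicit check that $Y^r-f$ stays irreducible for every nonzero reduced $v$ (distinct monic irreducibles, some exponent coprime to $r$) is a detail the paper leaves implicit, and it is handled correctly.
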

\begin{proof}
It is enough to show that the map $\ev^{x,P}_{\chi_r,q}$ is injective. If $\ev^{x,P}_{\chi_r,q}(v)=0$, then this means that the superelliptic curve
$y^r=\prod^L_{i=1} p_i(x)^{v_i}$ has at least $rq$ points, but the Hasse-Weil bound
combined with the assumption on $\deg{p_i}$ implies
\[rq\leq q+1+Lr\max_{i\in \{1,\dots, L\}}\{\deg{p_i}\}\sqrt{q}<rq,\]
which gives a contradiction.
\end{proof}

It is immediate to see that for $q\geq 3$ it is always possible to select degree $2$ (irreducible) polynomials, provided $L$ is not too large. Indeed, we obtain the following corollary, which is a direct consequence of Proposition \ref{prop:dimnondeg}.

\begin{corollary}
Let $P=\{p_1,\dots, p_L\}$ be a set of monic irreducible polynomials of degree $2$.
If $L< \frac{q(r-1)-1}{2r\sqrt q}$, then the dimension of $\mathcal{C}(\chi_r,P)$ is $L$.
\end{corollary}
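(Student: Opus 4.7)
The plan is to simply verify the hypothesis of Proposition \ref{prop:dimnondeg} when all polynomials in $P$ have degree exactly $2$. Since every $p_i$ is an irreducible monic polynomial of degree $2$, we have $\deg p_i = 2$ for each $i \in \{1,\dots,L\}$. Thus the condition $\deg p_i < \frac{q(r-1)-1}{rL\sqrt{q}}$ from Proposition \ref{prop:dimnondeg} becomes $2 < \frac{q(r-1)-1}{rL\sqrt{q}}$, which is equivalent to the assumed bound $L < \frac{q(r-1)-1}{2r\sqrt{q}}$. Hence the proposition applies and the dimension of $\mathcal{C}(\chi_r,P)$ is $L$.

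The only conceptual point to mention is the existence of sufficiently many degree-$2$ monic irreducible polynomials in $\vF_q[x]$: the number of such polynomials is $(q^2-q)/2$, which for $q \geq 3$ is comfortably larger than the bound $\frac{q(r-1)-1}{2r\sqrt{q}} \leq \frac{\sqrt q}{2}$, so a set $P$ of the required size $L$ can always be chosen. After this observation, the corollary follows directly from Proposition \ref{prop:dimnondeg} by substitution, with no further obstacle.
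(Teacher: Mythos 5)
Your proof is correct and takes the same route the paper intends: the corollary is presented there as a direct substitution of $\deg p_i = 2$ into Proposition~\ref{prop:dimnondeg}, and the remark about the $(q^2-q)/2$ available degree-two irreducibles mirrors the paper's own surrounding comment. Nothing to add.
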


It is worth noticing that we can essentially always select only degree two polynomials, since the regime of $L$ we are interested in is $o(\sqrt q)$ and the number of irreducible polynomials of degree two is $(q^2-q)/2$.

\begin{proof}[Proof of Theorem \ref{thm:main}]
We select $L=\lfloor q^{1/2-\varepsilon}\rfloor$ among the $(q^2-q)/2$
irreducible polynomials of degree $2$, say $p_1,\dots,p_L$. Then, we compute
$\ev^{x,P}_{\chi_r,q}(\vF_r^L)$
using such polynomials, getting a code of length $q$.
Observe that for $q$ greater than a constant depending only on $\varepsilon$,
the hypothesis of Proposition \ref{prop:dimnondeg} are verified:
\begin{equation}\label{eq:forL}
L=\lfloor q^{1/2-\varepsilon}\rfloor< \frac{\sqrt{q}}{4} -\frac{1}{4\sqrt{q}}\leq \frac{1}{2}\frac{q(r-1)-1}{r\sqrt{q}}.
\end{equation}
Hence, the dimension is $L$. Finally, the bound on the distance follows from Theorem \ref{thm:mindist}.
\end{proof}

\begin{remark}
The constant $C(\varepsilon)$ in Theorem \ref{thm:main} can be immediately extracted from the inequality \eqref{eq:forL}: select for example 
$q>C(\varepsilon)=6^{1/\varepsilon}$.
\end{remark}

\begin{proof}[Proof of Corollary \ref{cor:compdelsgoeth}]
Choose $q$ to be the next prime greater than $2^m$, which has size at most $2^m+2^{(7/12+\delta)m}$ by Huxley's work~\cite{huxley1972difference},
and construct the $[q,\lfloor q^{1/2-\varepsilon}\rfloor,q/2-q^{1-\varepsilon}]$ linear code guaranteed by Corollary \ref{cor:important}. Then, puncture the code at $t\leq 2^{(\frac{7}{12}+\delta)m}$ components, obtaining a new code $\mathcal C$ with distance at least $q/2-q^{1-\varepsilon}-2^{(\frac{7}{12}+\delta)m}$ but with length $2^m$ (and same dimension, because the distance asymptotic to $q/2$ guarantees non-degeneracy when one punctures at $2^{(\frac{7}{12}+\delta)m}\sim q^{(\frac{7}{12}+\delta)}=o(q)$ components).
\end{proof}

\section{Comparison with existing bounds}
Denote by $A_r(n,d)$ the maximal size of a $r$-ary code of length $n$ and distance~$d$.
Let us review upper and lower bounds on $A_r(n,d)$ and test them on
the code $\mathcal{C}(\chi_r,P)$ constructed in the previous section.

\subsection{Upper bounds}
The Hamming bound states that
\[
A_r(n,d) \leq \frac{r^n}{\sum_{j=0}^{t}\binom{n}{j}(r-1)^j},\qquad t=\left\lfloor\frac{d-1}{2}\right\rfloor.
\]
If $d\sim \frac{r-1}{r}n$ then $t\sim\frac{r-1}{2r}n$ and
the sum of binomial coefficients is asymptotic to its last term, which
can be estimated by $\geq r^{n\big(H_r(\frac{r-1}{2r})-o(1)\big)}$,
where $H_r$ is the $r$-ary entropy function, i.e.
\[H_r(x)=x\log_r(r-1)-x\log_rx-(1-x)\log_r(1-x).\]
This shows that the information rate satisfies
(cf.~also \cite[Proposition 8.4.3]{bib:stichtenoth2009algebraic})
\begin{equation}\label{hammingbound}
\frac{\log_rA_r(n,d)}{n} \leq 1-H_r\Bigl(\frac{r-1}{2r}\Bigr)-o(1).
\end{equation}
Let $\epsilon\in(0,1/2)$, $q$ be sufficiently large
and $L=\lfloor q^{1/2-\epsilon}\rfloor$, so that
the code $\mathcal{C}(\chi_r,P)$ has dimension $L$
by Proposition~\ref{prop:dimnondeg}. Hence, its rate is
\begin{equation}\label{S3:rate}
\frac{L}{q}\sim q^{-1/2-\epsilon}\sim n^{-1/2-\epsilon}
\end{equation}
(recall that $n=q$).
Comparing to \eqref{hammingbound} and observing that $H_r(\frac{r-1}{2r})\leq H_2(\frac{1}{4})\approx 0.8112$,
we see that \eqref{S3:rate} is off by a factor $n^{1/2+\epsilon}$.
At the same time,
we know that when the relative distance is bounded away from zero,
the Hamming bound is never sharp (see e.g.~\cite[Chapter 4]{venkatesan}).
In 2006, Barg and Nogin \cite{barg2006spectral} showed by spectral methods that one can
refine the Hamming bound in many situations. As an example, for binary codes their results
show \cite[p.~82]{barg2006spectral} that
\begin{equation}\label{eq:specboundgeneral}
A_2(n,d) \leq \frac{4(n-k)}{n-\lambda_k} \binom{n}{k}
\end{equation}
for all $k\leq n$ such that $\lambda_{k-1}\geq n-2d$,
where $\lambda_{k-1}$ (resp.~$\lambda_k$) is the maximal eigenvalue of the $k\times k$ self-adjoint matrix $S_{k-1}=(s_{ij})_{i,j=1}^{k}$
defined by $s_{i,i+1}=s_{i+1,i}=\sqrt{i(n+1-i)}$ for $1\leq i\leq k$ and $s_{ij}=0$ otherwise
(resp.~of the $(k+1)\times(k+1)$ matrix with $1\leq i\leq k+1$).
The number $\lambda_k$ is always positive and strictly less than $n$.
Estimates on $\lambda_k$ for $k$ fixed and $n\to\infty$ lead to the polynomial bound

\begin{equation}\label{spectralbound}
A_2(n,d)=O_k(n^k),
\end{equation}
provided $d\geq \frac{n}{2}-c_k\sqrt{n}$ for a suitable constant $c_k$,
see e.g.~\cite[(12)--(14)]{pang2023new} for more details.
Note that \eqref{spectralbound} is stronger than the Hamming bound
in this regime.

For what concerns us we observe that, if
$L\leq \lfloor n^{1/2-\epsilon}\rfloor$
and $k=(3L)^2$, then by \cite[Lemma~1 and (13)]{pang2023new}
and Theorem~\ref{thm:mindist} we have
$\lambda_k\leq (2+o(1))\sqrt{kn} = (6+o(1))L\sqrt{n}=O(n^{1-\epsilon})$ and
$\lambda_{k-1}\geq (1+o(1))\sqrt{kn}=(3+o(1))L\sqrt{n}\geq n-2d$.
Hence \eqref{eq:specboundgeneral} applies and yields (recall that $k=o(\sqrt{n})$)
\[
A_2(n,d)\leq (4+o(1)) \binom{n}{k}
\sim{} \left(\frac{ne}{k}\right)^k \frac{1}{\sqrt{2\pi k}} \exp\left(-\frac{k^2}{2n}\right)
\ll \exp\big(2k\log\frac{n}{k}\big),
\]
that is, the rate is at most $O(\frac{L^2}{n}\log n)$, as opposed to the
bound $O(1)$ obtained in \eqref{hammingbound}. In particular, recalling
that $\mathcal{C}(\chi_r,P)$ has rate $\frac{L}{n}$,
we see that we are off by a factor not larger than $O(L\log n)$,
meaning that the discrepancy between our construction
and the theoretical upper bound reduces as $L$ decreases.
In particular, we have the following.

\begin{corollary}
Let $A_2(n,d)$ be the maximal size of a binary code of length $n$ and distance $d<n/2$
with $|n-2d|\leq 2\sqrt{n}\log n$. Let $\chi$ be the quadratic character in $\mathbb{F}_q$
and let $\mathcal{C}(\chi,P)$ be the binary code constructed in
Definition \ref{def:code}, with $|P|=\log n$ and $\deg p=2$ for all $p\in P$. Then
\[\frac{\log A_2(n,d)}{\log |C(\chi,P)|}\ll (\log n)^2.\]
\end{corollary}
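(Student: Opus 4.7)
The plan is to assemble the ingredients from the discussion that immediately precedes the corollary: a lower estimate for $\log|\mathcal{C}(\chi,P)|$, an upper estimate for $\log A_2(n,d)$ via the Barg--Nogin spectral bound, and finally the ratio.

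First, I would compute $\log|\mathcal{C}(\chi,P)|$. With $L=|P|=\log n$ monic irreducibles of degree $2$, the hypothesis of Proposition~\ref{prop:dimnondeg} (specialised to $q=n$ and $r=2$) reads $L<(n-1)/(4\sqrt n)$, which is definitively true in $n$. Hence $\dim\mathcal{C}(\chi,P)=L$, so $\log|\mathcal{C}(\chi,P)|\asymp\log n$. One has to note that $n$ must be an odd prime power so that both $\mathcal{C}(\chi,P)$ and the quadratic character $\chi$ are defined; this is the implicit setting of the corollary.

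Second, I would apply~\eqref{eq:specboundgeneral} to $A_2(n,d)$, exactly as done in the paragraph preceding the corollary, taking $k=(3L)^2=9(\log n)^2$, which is $o(\sqrt n)$. The eigenvalue estimates recalled there give $\lambda_k\leq(2+o(1))\sqrt{kn}=o(n)$ and $\lambda_{k-1}\geq(3+o(1))L\sqrt n=(3+o(1))\sqrt n\log n$. Since the corollary's hypothesis ensures $n-2d\leq 2\sqrt n\log n$, the inequality $\lambda_{k-1}\geq n-2d$ holds definitively in $n$, so~\eqref{eq:specboundgeneral} applies and yields $A_2(n,d)\ll\binom{n}{k}$. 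Stirling then gives $\log\binom{n}{k}=k\log(n/k)+O(k)\ll(\log n)^2\log n=(\log n)^3$.

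Combining the two estimates,
\[
\frac{\log A_2(n,d)}{\log|\mathcal{C}(\chi,P)|}\ll\frac{(\log n)^3}{\log n}=(\log n)^2,
\]
which is the claim. I do not anticipate a real obstacle: the spectral machinery is already set up in the preceding discussion, and the only specialisation to check is that $\lambda_{k-1}\geq n-2d$ holds uniformly over all codes counted by $A_2(n,d)$. This follows automatically from the stated constraint $|n-2d|\leq 2\sqrt n\log n$ and the asymptotic $\lambda_{k-1}\sim 3\sqrt n\log n$ produced by the choice $k=(3L)^2$.
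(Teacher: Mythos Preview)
Your proposal is correct and follows essentially the same approach as the paper: the corollary has no separate proof in the paper, being an immediate specialisation of the preceding paragraph (apply the Barg--Nogin bound with $k=(3L)^2$, use the eigenvalue estimates to verify $\lambda_{k-1}\geq n-2d$, and compare the resulting rate bound $O(\tfrac{L^2}{n}\log n)$ with the code's rate $L/n$), and you reproduce exactly this computation with $L=\log n$.
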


\subsection{Lower bounds: Gilbert--Varshamov bound, Delsarte--Goethals codes}\label{subsec:lowerbounds}
One of the most well-known lower bounds on $A_r(n,d)$ is the Gilbert--Varshamov bound
(see~\cite[Chapter 2]{venkatesan}), which is proved by probabilistic methods
and is a good testing device to measure the distance of a deterministic code
from a random code. The bound states that
\begin{equation}\label{S3:GV}
A_r(n,d) \geq \frac{r^n}{\sum\limits_{i=0}^{d-1}\binom{n}{i}(r-1)^i}.
\end{equation}
Denote by $A^*$ the quantity on the right-hand side above.
We are interested in the regime $d-1\geq \frac{r-1}{r}n-n^{1-\epsilon}$.
Let us show that in this regime we have $\frac{\log A^*}{n}\ll n^{-1+\epsilon}$,
whereas our code has rate at least $n^{-1/2-\epsilon}$. This will imply
that our code has rate better than the GV bound by at least a factor $n^{1/2-2\epsilon}$.

The sum of binomial coefficients in \eqref{S3:GV}
can be estimated crudely by keeping only the last term.
We write $d-1=np$ with $p=\frac{r-1}{r}-\frac{1}{n^\epsilon}$ and obtain
\[
\sum_{i=0}^{d-1}\binom{n}{i}(r-1)^i \geq \binom{n}{np}(r-1)^{np} \geq r^{nH_r(p)}\frac{1}{\sqrt{8p(1-p)}}.
\]
This shows that
\[A^*\leq r^{n(1-H_r(p))}\sqrt{8p(1-p)}\leq \sqrt{2}r^{n^\epsilon}.\]
On taking logarithms and dividing by $n$,
we deduce that $\frac{\log A^*}{n}\ll n^{-1+\epsilon}$, as claimed.
On the other hand, by picking $L=n^{1/2-\epsilon}$ and selecting polynomials $p_1,\dots,p_L$
of degree 2, the code $\mathcal C(\chi_r,P)$ from Section \ref{sec:construction} has dimension $L$
and therefore rate $n^{-1/2-\epsilon}$, proving our second claim.

Let $s_m$ be a sequence of natural numbers such that $s_m\leq m/2-1$. 
The choice of parameter $s=s_m$ in the Delsarte--Goethals construction produces  a sequence of codes
in the regime $d/n\sim 1/2$ and $n-2d=\Omega(\sqrt{n})$ 
(recall that $n=2^m$ and $d=2^{m-1}-2^{m/2-1+s_m}$).
Notice that the situation $n-2d=o(\sqrt{n})$ is already dealt with using other constructions such as Sidelnikov's, which achieves McEliece optimality in \eqref{eq:McElieceBound}.
We emphasize that the McEliece bound only allows for very small codes, i.e.~at most quadratic in the length.
 
The following proposition shows a wide set of regimes in which we provide exponentially larger codes than Delsarte--Goethals codes.

\begin{proposition}\label{prop:comparison}
Let $\delta\in(0,5/12)$ be a real number.
Let $s_m$ be a sequence of natural numbers  such that  $\delta m+1\leq s_m\leq m/2-1$. 
Let $\{\mathcal D_m\}$ be a sequence of Delsarte--Goethals codes with
length $2^m$, size $2^{s_m(m-1)+2m}$ and distance $d(\mathcal{D}_m)=2^{m-1}-2^{m/2-1+s_m}$.
There exists a sequence of shadow codes $\{\mathcal C_m\}$, each of length a prime power $q_m\sim 2^m$, with $q_m>2^m$,  such that  
\[\log_2(|\mathcal C_m|)/\log_2(|\mathcal D_m|)=\Omega\Bigl(\frac{2^{m\delta/2}}{m^2}\Bigr),\]
and $\mathcal C_m$ verifies 
\begin{equation}\label{eq:conditionn2d}
d(\mathcal D_m)\leq d(\mathcal C_m), 
\end{equation}
definitively in $m$.
\end{proposition}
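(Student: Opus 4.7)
The plan is to invoke Corollary~\ref{cor:important} directly at a length $q_m$ that is the smallest prime power strictly greater than $2^m$, without any puncturing. Existence of such a $q_m$ with $q_m\sim 2^m$ follows, for instance, from Huxley's theorem on primes in short intervals, exactly as used in the proof of Corollary~\ref{cor:compdelsgoeth}. The only real freedom is then the choice of $\varepsilon\in(0,1/2)$ in Corollary~\ref{cor:important}, and the two requirements of the proposition will pin $\varepsilon$ down to a subinterval depending on $\delta$.

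Concretely, I would take $\varepsilon=1/2-3\delta/4$, which lies strictly between $1/2-\delta$ and $1/2-\delta/2$ because $\delta>0$. The lower bound $\varepsilon>1/2-\delta$ is what will ensure the distance comparison: the desired inequality $q_m/2-2q_m^{1-\varepsilon}\geq 2^{m-1}-2^{m/2-1+s_m}$ reduces, using $q_m=(1+o(1))2^m$ and the hypothesis $s_m\geq\delta m+1$, to a bound of the form $(1/2-\varepsilon)m\leq \delta m-O(1)$, which holds definitively in $m$ precisely when $\varepsilon>1/2-\delta$. The upper bound $\varepsilon<1/2-\delta/2$ is what will win the size comparison.

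For the size ratio I would apply Corollary~\ref{cor:important} to obtain $\log_2|\mathcal C_m|=\lfloor q_m^{1/2-\varepsilon}\rfloor\gg 2^{m(1/2-\varepsilon)}$, while $\log_2|\mathcal D_m|=s_m(m-1)+2m\leq m^2/2+O(m)$; hence
\[
\frac{\log_2|\mathcal C_m|}{\log_2|\mathcal D_m|}\gg \frac{2^{m(1/2-\varepsilon)}}{m^2}=\frac{2^{3m\delta/4}}{m^2},
\]
which is $\Omega(2^{m\delta/2}/m^2)$ with room to spare. The main obstacle is really just to check that the window $(1/2-\delta,\,1/2-\delta/2)$ for $\varepsilon$ is non-empty inside $(0,1/2)$; since the window has length $\delta/2>0$, the argument works for any $\delta\in(0,1/2)$, and the restriction $\delta<5/12$ in the statement is only a holdover accommodating the Huxley-style bound that governs how close $q_m$ is to $2^m$.
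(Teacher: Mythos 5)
Your proof is correct and follows the same route as the paper: take $q_m$ to be the smallest prime power above $2^m$ (using Huxley, or even just the prime number theorem, to get $q_m\sim 2^m$), invoke Corollary~\ref{cor:important} with a fixed $\varepsilon$, and then observe that the two requirements carve out a nonempty window for $\varepsilon$. The paper picks $\varepsilon=(1-\delta)/2$ while you pick $\varepsilon=1/2-3\delta/4$; both lie in the admissible range $(1/2-\delta,\,1/2-\delta/2]\cap(0,1/2)$, and both give a ratio $\gg 2^{m\delta/2}/m^2$ (yours actually gives the stronger $2^{3m\delta/4}/m^2$, but that is harmless). Your concluding remark is also right: the constraint $\delta<5/12$ plays no logical role in the argument beyond making the Huxley estimate $q_m\le 2^m+2^{m(7/12+\delta)}$ nontrivial as written in the paper, and if one instead fixes an independent small Huxley exponent the proof goes through for any $\delta\in(0,1/2)$.
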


\begin{remark}
The theorem above shows that we can construct a sequence of codes with better distance, and that have exponential parameters in Delsarte--Goethals' codes parameters.
\end{remark}
\begin{proof}
We will take $\mathcal{C}_m$ to be the code provided by Corollary \ref{cor:important},
with $q=q_m$ and a suitable choice of $\varepsilon\in(0,1/2)$.
First, fix $q_m$ to be the smallest prime power greater than $2^m$,
which has size at most $2^{m} + 2^{m(7/12+\delta)}$ by \cite{huxley1972difference},
definitively in $m$.
To verify \eqref{eq:conditionn2d}, it is enough to show that
$q_m-4q^{1-\varepsilon}-2^m+ 2^{m/2+s_m}\geq 0$
definitively in $m$. Certainly, $q_m-2^m\geq 0$ by definition
of $q_m$, so we can focus on proving $2^{m/2-1+s_m}\geq 2q_m^{1-\varepsilon}$.
We need to show
\[m/2-1+s_m\geq (1-\varepsilon)\log_{2}(q_m)+1.\]
Since $\delta<5/12$, we have $q_m\leq 2^{m} + 2^{m(7/12+\delta)}<2^{m+1}$.
Therefore, definitively in $m$, we can estimate
\[m/2-1+s_m\geq m(1/2+\delta)\geq  (1-\varepsilon)(m+1)+1\geq (1-\varepsilon)\log_{2}(q_m)+1\]
provided $1/2+\delta>1-\varepsilon$, which forces $\varepsilon\in (1/2-\delta,1/2)$.
Setting $\nu=1-\delta$ and $\varepsilon=\nu/2$, we obtain
\[
\frac{\log_2(|\mathcal C_m|)}{\log_2(|\mathcal D_m|)}\sim
\frac{q_m^{1/2-\varepsilon}}{s_m(m-1)+2m}
\geq 
\frac{2^{m\delta/2}}{m(m-1)/2+2m},
\]
from which the conclusion follows.
\end{proof}

\section{An application of Coding Theory to Number Theory: the maximum value of certain character sums}\label{sec:thmcharsum}

Finally, we apply the Plotkin bounds and the McEliece bound
to obtain two results on exponential sums.
\begin{theorem}
Let $q,L$ and $P=\{p_1,\dots,p_L\}$ satisfy the conditions in Proposition~\ref{prop:dimnondeg}.
Let $\chi$ be the quadratic character on $\mathbb{F}_q$.
Moreover, let $3+\log_2 q\leq \ell\leq L$.
Then there exist $a_1,\dots,a_\ell\in\{0,1\}$, not all zero, such that, on setting $f=\prod_{i=1}^{\ell} p_i^{a_i}$,
we have
\[
\sum_{x\in\mathbb{F}_q} \chi(f(x)) > 1.
\]
Similarly, assuming $\ell\geq \frac{3}{2}\log_2q$, we have
\begin{equation}\label{0301:eq001}
\max_{f}\Bigl|\sum_{x\in\mathbb{F}_q} \chi(f(x))\Bigr| = \Omega(\sqrt{q}),
\end{equation}
where $f$ ranges over the set $T=\big\{\prod_{i=1}^{\ell}p_i^{e_i}:\;e_i\in\{0,1\}\;\forall\;i=1,\dots,\ell\big\}\setminus\{1\}$.
\end{theorem}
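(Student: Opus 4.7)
The strategy is to translate both claims into statements about the binary linear shadow code $\mathcal C(\chi,P_\ell)$ of Definition~\ref{def:code}, built from $P_\ell=\{p_1,\dots,p_\ell\}$, and then to apply the Plotkin bound (for the first claim) and the McEliece bound (for the second). Since each $p_i$ is monic irreducible of degree $\geq 2$, none vanishes on $\vF_q$, so $\chi(p_i(x_j))\in\{\pm 1\}$ for all $i,j$. Consequently, for $v\in\vF_2^\ell$ and $f_v=\prod_{i=1}^\ell p_i^{v_i}$, the codeword $\ev^{x,P_\ell}_{\chi,q}(v)$ has weight $w_v=\#\{j:\chi(f_v(x_j))=-1\}$, whence
\[
\sum_{x\in\vF_q}\chi(f_v(x))=q-2w_v.
\]
Only the parity of the exponents matters for $\chi(f(x))$, so a statement about character sums over the set $T$ is equivalent to one about codewords of $\mathcal C(\chi,P_\ell)$. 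Under the hypotheses of Proposition~\ref{prop:dimnondeg}, this code has size $2^\ell$.

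For the first claim I argue by contradiction: if every nonzero $v\in\vF_2^\ell$ had $w_v\geq(q+1)/2$, the minimum distance would satisfy $d\geq(q+1)/2>q/2$, and Plotkin's bound would yield $|\mathcal C(\chi,P_\ell)|\leq 2d/(2d-q)\leq q+1$. The hypothesis $\ell\geq 3+\log_2 q$ gives $|\mathcal C(\chi,P_\ell)|=2^\ell\geq 8q$, contradicting $|\mathcal C|\leq q+1$. Hence some nonzero $v$ has $w_v\leq(q-1)/2$, and choosing $a_i=v_i\in\{0,1\}\subset\mathbb Z_{\geq 0}$ produces $f=\prod p_i^{a_i}$ with $\sum_x\chi(f(x))\geq 1$; the strict inequality $>1$ then follows from the substantial gap between $|\mathcal C|\geq 8q$ and Plotkin's threshold $q+1$, which provides enough slack to rule out the boundary case $w_v=(q-1)/2$.

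For the second claim, fix $c>0$ small and assume for contradiction that $|\sum_x\chi(f(x))|\leq c\sqrt q$ for every $f\in T$. Translating back, every nonzero codeword of $\mathcal C(\chi,P_\ell)$ has weight in the band $\big((q-c\sqrt q)/2,(q+c\sqrt q)/2\big)$, so $q-2d\leq c\sqrt q$. The McEliece bound~\eqref{eq:McElieceBound} then gives
\[
|\mathcal C(\chi,P_\ell)|\leq (1+o(1))\,q(c\sqrt q+2)=O\!\left(c\,q^{3/2}\right),
\]
whereas $\ell\geq\tfrac32\log_2 q$ forces $|\mathcal C(\chi,P_\ell)|\geq q^{3/2}$. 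These two estimates are incompatible once $c$ falls below a fixed positive constant, producing the desired $\Omega(\sqrt q)$ bound in~\eqref{0301:eq001}.

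The main obstacle is applying McEliece cleanly in the second part: the bound~\eqref{eq:McElieceBound} is stated in the regime $n-2d=o(\sqrt n)$, while the contradictory hypothesis sits precisely on the boundary $n-2d=\Theta(\sqrt n)$. I would handle this by letting the contradiction constant $c=c_q\to 0$ arbitrarily slowly, so that $c_q\sqrt q=o(\sqrt q)$ lands legitimately inside McEliece's regime; the same clash of inequalities then yields the desired $\Omega(\sqrt q)$ bound, with implied constant any fixed $c<1$ once $q$ is large. The first part is more straightforward, modulo the parity sharpening noted above.
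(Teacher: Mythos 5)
Your overall strategy coincides with the paper's: translate the character sums into weights of codewords of the shadow code, then apply Plotkin for the first claim and McEliece for the second. The second part is fine (the paper simply phrases the contradictory hypothesis directly as $\max_f|\sum_x\chi(f(x))|=o(\sqrt q)$, which is equivalent to your slowly-decaying $c_q$), and the translation $\sum_x\chi(f_v(x))=q-2w_v$ is exactly the paper's identity.

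There is, however, a genuine gap in the first part. You assume for contradiction that $w_v\geq(q+1)/2$ for all nonzero $v$, which is the negation of $\exists v:\sum_x\chi(f_v(x))\geq 1$, not of $\exists v:\sum_x\chi(f_v(x))>1$; so after the Plotkin contradiction you correctly obtain only $\sum_x\chi(f_v(x))\geq 1$ for some $v$, and the upgrade to $>1$ remains unproved. The appeal to ``slack between $8q$ and $q+1$'' does not repair this: the form of the Plotkin bound you invoke, $|\mathcal C|\leq 2d/(2d-q)$, is only valid when $2d>q$, i.e.\ $d\geq(q+1)/2$; it says nothing whatsoever in the boundary case $d=(q-1)/2$ (equivalently $\sum=1$ for all $v$), no matter how large $|\mathcal C|$ is. The correct negation of the claim is $d\geq(q-1)/2$, i.e.\ $2d+1\geq q$, and one then needs a Plotkin-type bound valid in that larger range. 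The paper invokes precisely such a bound from MacWilliams--Sloane, namely that $2d+1\geq q$ implies $A_2(q,d)\leq 4d+4\leq 4q+4$, which is what produces the contradiction with $2^\ell\geq 8q$. (Alternatively one can recover this by shortening the linear code twice to reach length $q-2$, where $d\geq(q-1)/2>(q-2)/2$ puts you back in the standard Plotkin regime, then multiplying by $4$.) Without one of these, your argument establishes only $\geq 1$, which is strictly weaker than the theorem statement.
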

\begin{proof}
We can construct a subcode of our code by selecting a subset of $\ell$ polynomials and taking the span $\mathcal D$ of the image of such polynomials under the map $\ev^{x,P}_{\chi_{2,q}}$.
The code $\mathcal D$ is a code over $\vF_2^q$ and has minimum distance at least the distance of our code, which is 
\begin{equation}\label{0301:eq002}
d=\min_{f\in T} \sum_{x\in \vF_q}\frac{1-\chi(f(x))}{2}=\frac{q}{2} - \frac{1}{2}\max_{f\in T}\sum_{x\in\mathbb{F}_q}\chi(f(x)).
\end{equation}
By contradiction, assume
\[\max_{f\in T} \sum_{x\in \vF_q}\chi(f(x)) \leq 1.\]
Then $2d+1\geq q$. From the Plotkin bounds \cite[Chapter 2, \S2]{macwilliams1977theory}
one deduces that if $2d+1\geq q$ then $A_2(q,d)\leq 4d+4\leq 4q+4$ since $d\leq q$.
On the other hand,  $A_2(q,d)=2^{\ell}$ by Proposition~\ref{prop:dimnondeg} and
recalling our choice of $\ell$ we obtain
\[
8q \leq 2^{\ell} = A_2(q,d) \leq 4q+4,
\]
which is impossible for all $q\geq 2$.

Now let us prove \eqref{0301:eq001}. Assume by contradiction that
\[
\max_{f} \Bigl|\sum_{x\in\mathbb{F}_q} \chi(f(x))\Bigr| = o(\sqrt{q}).
\]
Then by \eqref{0301:eq002} we get $q-2d=o(\sqrt{q})$ and by \eqref{eq:McElieceBound}
we deduce we should have $A_2(q,d)\leq (1+o(1)) q(q-2d+2)=o(q^{3/2})$.
The assumption on $\ell$ yields
\[
q^{\frac{3}{2}} \leq 2^{\ell} = A_2(q,d) = o(q^{3/2}),
\]
which is impossible for large $q$.
\end{proof}

\section{Code Boosts}\label{sec:boosts}

\subsection{Degree one Boost}
As Kschischang and Tasbihi showed in \cite{KT2024}, it is possible to gain a constant factor on the rate by
selecting degree one polynomials in Theorem~\ref{thm:mindist}
and considering a shorter code. This nice modification doubles the rate of the shadow code at the price of having a slightly more complicated construction.  The modification does not improve the order of convergence of the rate decay, which is still $q^{-1/2}$ at best as $q$ grows. In what follows, we propose a couple of boosts that mitigate asymptotically the order of convergence of the rate decay,
reaching up to any arbitrarily small negative power of the length.
For the sake of simplicity of discussion we still restrict to
irreducible polynomial of degree at least two, even though adaptations to
degree one are possible.

\subsection{Binary Concatenation Boost}
In the same paper \cite{KT2024}, Kschischang and Tasbihi showed
that it is possible to concatenate a Reed-Solomon code (RS) with a first-order Reed-Muller inner code (RM) to obtain a code that has better guaranteed distance than a Shadow code and same asymptotic rate and relative distance. In this section we show
that concatenating an RS code of length $q^n$ with a Shadow code of length $q$ produces a dimension boost to Theorem \ref{thm:main},
reaching dimension as large as $N^{1-\frac{1}{2n+2}-\frac{2\varepsilon}{n+1}}$ (with $N$ denoting the code length and fixed arbitrarily small $\varepsilon$),  still having relative distance asymptotic to $1/2$. In fact, for a fixed length, concatenation of RS and Shadow code will always produce a better rate than a concatenation of RS and RM,
with distance $d=N/2+O(N^{1-\varepsilon})$.
This happens because first order RM codes do not allow large dimension compared with length, whereas Shadow codes do, increasing the parameter range of the codes that are obtainable by concatenation (as we will explain at the end of the section) at the price of increasing $N-2d$ (but leaving the main term of $d$ untouched).

Fix $\varepsilon\in(0,\frac{1}{2})$ and any positive integer $n$.
Let $q$ be a prime power and set $m=\lfloor q^{1/2-\varepsilon}\rfloor$ and $K=\lfloor q^{n-\varepsilon}\rfloor$.
Choose $q$ large enough so that we have $2^m\geq q^n$.
We consider 
a Reed-Solomon code $\mathcal C$ over $\vF_{2^m}$ with parameters $[q^n,K,q^n-K+1]$
and a binary Shadow code of length $q$, dimension $m$ and distance $q/2+O(q^{1-\varepsilon})$,
as given by Theorem~\ref{thm:main} and Corollary~\ref{cor:important}.

The concatenation works as follows:

\[\vF_{2^m}^{K} \xrightarrow{\text{RS Encoding}}\vF_{2^m}^{q^n}\xrightarrow{\text{componentwise encoding with ShC}} \vF_2^{q^{n+1}} \]

The resulting code has length $N=q^{n+1}$ and dimension $Km=q^{n+1/2-2\varepsilon}+O(q^{n})$.
The distance $d$ of such a code is at least the product of the distances of the two original ones,
since a non-zero vector in the code has at least $q^n-K+1$ blocks of $q$ components that are non zero (thanks to the Reed-Solomon encoding and the injectivity of the encoding map for Shadow codes) and within each block there are at least $q/2+O(q^{1-\varepsilon})$ entries that are different from zero. This amounts to a total of
\[
(q^n-K+1)(q/2-O(q^{1-\varepsilon})) = q^{n+1}/2 + O\left(q^{n+1-\varepsilon} + Kq\right)
\]
non-zero entries. Recalling that $K=\lfloor q^{n-\varepsilon}\rfloor$ leads to
\[d\geq q^{n+1}/2-O(q^{n+1-\varepsilon}),\]
where the implied constant depends only on $\varepsilon$. 
We summarise this in a theorem.
\begin{theorem}
Let $\varepsilon\in (0,1/2)$ and let $n$ be a positive integer and $q$ an odd prime power.
There exists a binary code of length $q^{n+1}$ with minimum distance at least $d\geq q^{n+1}/2-O(q^{n+1-\varepsilon})$, and dimension $q^{n+1/2-2\varepsilon}+O(q^{n})$.
\end{theorem}

In other words, as $N=q^{n+1}\to\infty$ the relative distance $d/N$
of the concatenated code is asymptotic to $1/2$ and the dimension is
\[q^{n+1-1/2-2\varepsilon}+O(q^{n})=N^{1-\frac{1}{2n+2}-\frac{2\varepsilon}{n+1}}+O(N^{1-\frac{1}{n+1}}).\]

Such parameters cannot be achieved by concatenating a $[M,K,M-K+1]$ Reed-Solomon code defined over $\vF_{2^a}$ with a first order Reed-Muller code
with parameters $[2^a,a,2^{a-1}]$, as we now explain.
To make a fair comparison, say one wants a code of length roughly $q^{n+1}$,
then one requires $M2^{a}\sim q^{n+1}$.
Now, one would need to at least match the dimension growth of the concatenation of the Reed-Solomon code and the Shadow code:
$Ka=\Omega(q^{n+1/2-2\varepsilon})$.
The largest $K$ can grow is equal to the length of the Reed-Solomon code, i.e. $K\sim M$. So this gives 
\[M2^{a}\sim q^{n+1} \quad \text{and} \quad Ma=\Omega(q^{n+1/2-2\varepsilon}),\]
from which it follows that
\[aq^{n+1}\sim aM2^{a}=\Omega(2^aq^{n+1/2-2\varepsilon}).\]
Then, we must have that $2^aq^{-1/2-2\varepsilon}$ is bounded by $O(a)$
and so $2^{a-(1/2+2\varepsilon)\log q}=O(a)$, which implies that at the very least $a\leq (1/2+3\varepsilon)\log q$ (because certainly $\log(a)\leq \varepsilon\log(q)$, or the code is too long, by looking at $M2^{a}\sim q^{n+1}$) and therefore $2^a\leq Cq^{1/2+3\varepsilon}$.
Given the asymptotic restrictions of $2^a$, the only way to get $M2^a\sim q^{n+1}$ is to force $M=\Omega(q^{n+1-1/2-3\varepsilon})$. But this is impossible:
there is no Reed Solomon code of length $q^{n+1-1/2-3\varepsilon}$ defined over a field of size at most $2^a\leq Cq^{1/2+3\varepsilon}$ for $n\geq 1$ and $q$ large enough.

\begin{remark}
The reason why one has this advantage by replacing RM with a Shadow code (SC) is that SCs have a larger rate than a first order Reed Muller, which guarantees that for a fixed length replacing RM with SC will lead to a better rate. More in general, however one concatenates using RS and RM for a fixed length, one should always be allowed to replace RM with SC and get a better rate: concatenation with SC produces an order of magnitude larger rate code at the price of increasing $n-2d$, but leaving $d\sim n/2$. This is particularly convenient in the case of these codes, as the bottleneck in their parameters lies in the dimension.
\end{remark}

\subsection{Binary Non-Linear Boost}

It is possible to enlarge the message space of Shadow codes by switching to the context of non-linear codes. We sketch here the construction.
Let $q$ be an odd prime power and let
\[\mathcal M=\{f\in \vF_q[x]: \, \text{$f$ is irreducible of degree $m$}\}\]
be the message space. 
Encoding works exactly as for standard Shadow codes:
each polynomial $f\in \mathcal M$ gets evaluated at all elements of $\vF_q$, and then the quadratic character is applied. In other words, the code is the image of the map
\[\psi_{\text{enc}}: \mathcal M\longrightarrow \vF_2^q\]
\[f\mapsto (\phi_2(\chi(f(x))))_{x\in \vF_q}\]
By choosing $m=q^{1/2-\varepsilon}$ and applying the usual estimate on \[(\chi(f(x)))_{x\in \vF_q}+(\chi(g(x)))_{x\in \vF_q}=(\chi(f(x)g(x)))_{x\in \vF_q}\] one gets that $d(\psi_{\text{enc}}(f),\psi_{\text{enc}}(g))$ is at least $q/2+O(q^{1-\varepsilon}$) if $f\neq g$.
Clearly, $|\mathcal M|= q^m/m+O(q^{m/2})$.
By applying logarithm on both sides we obtain
$\log(|\mathcal M|)\sim q^{1/2-\varepsilon}\log(q)$ which improves the rate of our code by a factor of $\log(q)$.
\section{On the Square of Shadow Codes}
Here we focus on $r = 2$, that is, binary Shadow codes in $\vF_2^q$.
Recall that the Schur product of two vectors $x,y\in \vF_2^q$ is defined as $x\star y = (x_1y_1, x_2y_2,\ldots,x_ny_n)$.
Then, the {\em square} of a code $C$ is defined as $C\star C = \langle\{x\star y\mid x,y\in C\}\rangle_{\vF_2}$.
Note that if $\{g_1,\cdots, g_L\}$ is a basis for $C$ then $\{g_i\star g_j \mid 1\leq i \leq j \leq L\}$ is a generating set for $C\star C$.

The main goal of this section is to demonstrate that the square of a Shadow code grows fast, mimicking the behavior of random codes, which in turn has applications in cryptography~\cite{7942048}. 
Clearly ${\rm dim}(C\star C)\leq {\rm min}\{q,L(L+1)/2\}$, and a random code will meet the bound with high probability~\cite{Zemor-square}.
\begin{theorem}\label{T-square}
Let $C$ be a binary Shadow code of dimension $L$ generated by degree-two polynomials $p_1,\dots,p_L$. If $q > 16L^22^L$ then $\dim(C\star C) \geq L(L+1)/2$. 
\end{theorem}
\begin{proof}
Let $\{g_1,\cdots, g_L\}$ be a basis for $C$ originating from $p_1,\dots,p_L$, of degree $2$.
Recall that codewords $g_i$ are indexed by $\vF_q$ so we will write $g_i = (g_{i,a})_{a\in \vF_q}$.
Suppose that $\dim(C\star C) < L(L+1)/2$. 
That means that there is a non trivial dependency, that is, $\sum_{i\leq j}\lambda_{i,j} g_i\star g_j = 0$ and not all $\lambda_{i,j}$'s are zero.
In particular, this implies
\begin{equation}\label{e-Q}
\sum_{i\leq j}\lambda_{i,j} g_{i,a}g_{j,a} = 0, \quad \text{for all } a\in \vF_q.
\end{equation}
Let us consider the quadratic form $\tilde{Q}:\vF_2^L\rightarrow \vF_2$ defined as $\tilde{Q}(x) = \sum_{i\leq j}\lambda_{i,j}x_ix_j$, which is nonzero by assumption.
Define also $v_a = (g_{i,a})_{1\leq i \leq L} \in \vF_2^L$.
Due to~\eqref{e-Q}, we have $\tilde{Q}(v_a) = 0$ for all $a\in \vF_q$.
Next, we will transfer everything to $\{1,-1\}$. 
Throughout, variables $x,y$ will be identified via $y = (-1)^x$.
Let us consider the associated {\em sign} quadratic form $Q:\{1,-1\}^L\rightarrow \{1,-1\}$ defined as $Q(y) = (-1)^{\tilde{Q}(x)}$.
Let $w_a = (\chi(p_i(a))_{1\leq i \leq L} \in \{1,-1\}^L$ corresponding to $v_a$. 
With this notation (and due to~\eqref{e-Q} again) we have that $Q(w_a) = 1$ for all $a\in \vF_q$.

The Fourier expansion of $Q$ is
\begin{equation}
Q(y) = \sum_{S\subset [L]}\widehat{Q}(S) \prod_{i\in S}y_i, \quad \widehat{Q}(S) = \frac{1}{2^L}\sum_{y \in \{1,-1\}^L}Q(y)\prod_{i\in S} y_i.
\end{equation}
If we evaluate on $w_a$, we have
\begin{align}
Q(w_a) & = \sum_{S\subset [L]}\widehat{Q}(S) \prod_{i\in S}w_{i,a} \\
& = \sum_{S\subset [L]}\widehat{Q}(S) \prod_{i\in S}\chi(p_i(a))\\
& = \sum_{S\subset [L]}\widehat{Q}(S)\chi(p_S(a)).
\end{align}
where $p_S(x) = \prod_{i\in S}p_i(x)$, and in the last equation we have used the multiplicativity of $\chi$.
Summing over all $a\in \vF_q$ and swapping summations, we obtain
\begin{equation}
q = \sum_{a\in \vF_q} Q(w_a) = \sum_{S\subset[L]}\widehat{Q}(S)\sum_{a\in \vF_q}\chi(p_S(a)).
\end{equation}
In the above equation, $S = \emptyset$ contributes $q\widehat{Q}(\emptyset)$, and we obtain
\begin{equation}
q(1 - \widehat{Q}(\emptyset)) = \sum_{S\neq \emptyset}\widehat{Q}(S)\sum_{a\in \vF_q}\chi(p_S(a)).
\end{equation}
Using the following bounds 
\begin{itemize}
\item $\Big|\sum_{a\in \vF_q}\chi(p_S(a))\Big| < 2L\sqrt{q}$
\item $|\widehat{Q}(\emptyset)|\leq 1/2$
\item $|\widehat{Q}(S)| = 2^{-m}$ exactly $2^{2m}$ times, where $2m \leq L$ is the rank of $Q$, and 0 otherwise
\end{itemize}
(the Weil bound implies the first and the last two are well known in the theory of quadratic forms~\cite{Lidl_Niederreiter_1996}) we obtain
\begin{equation}
\frac{q}{2} \leq 2^{L/2}\cdot 2L \cdot \sqrt{q},
\end{equation}
and this concludes the proof.
\end{proof}
\begin{corollary}\label{C-bound}
If $L = O(\log(q))$ then $\dim(C\star C) = L(L+1)/2$.
\end{corollary}
\begin{remark}
While Corollary~\ref{C-bound} is a theoretical guarantee {\em for all instances}, the expected value of $\dim(C\star C)$ of a Shadow code $C$ of dimension $L$ is still $L(L+1)/2$ as long as $L = O(\sqrt{q})$.
The intuition behind this relies on the fact that character values approach uniform distribution as $q$ grows and the codewords of $C$ behave like random vectors.
For cryptographic applications, where a high rate is desired for security purposes, one can construct Shadow codes of dimension slightly above the natural bound of $\sqrt{q}$.
In the figure below, we show the best Shadow code (in terms of the dimension) found for all primes $q\leq 1000$.
\begin{figure}[htb!]
    \centering
    \includegraphics[width=0.75\linewidth]{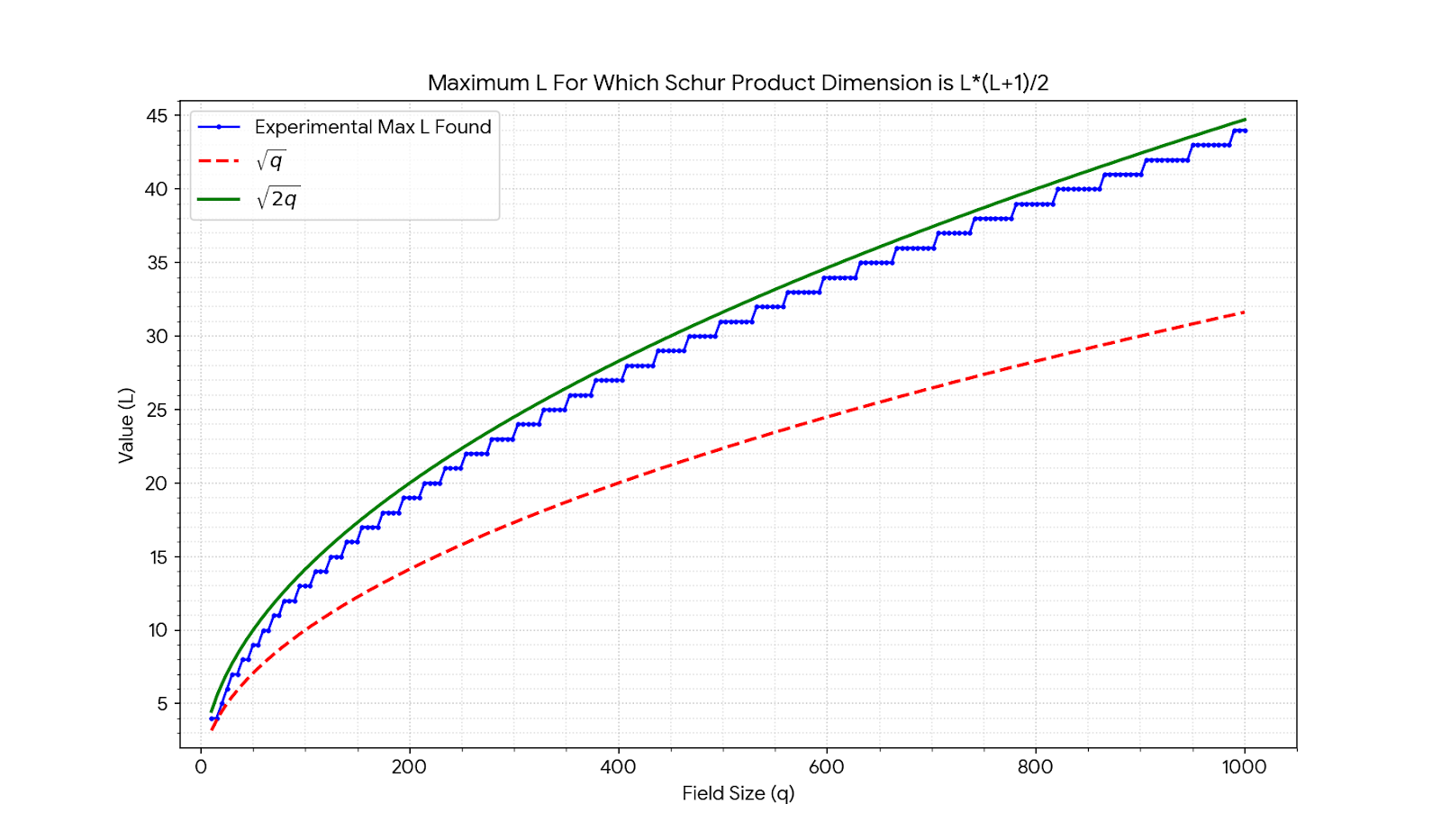}
    \caption{Best Shadow code found for primes $q\leq 1000$.}
    \label{fig:placeholder}
\end{figure}
As mentioned $\dim(C\star C)\leq q$, and so for the largest possible square code, we would want $L(L+1)/2 \sim q$, which asymptotically becomes $L\sim \sqrt{2q}$.
This experimental data also demonstrates that the square of Shadow code is close to the optimal bound $\sqrt{2q}$.
\end{remark}
\subsection{The Square Code of Concatenated RS-RM}
The goal of this subsection is to show that the square of a concatenated RS-RM code has low dimension, making them infeasible for cryptographic applications.

Let us first fix some notation.
Fix a vector space isomorphism $\Theta : \vF_2^{m+1}\rightarrow \vF_{2^{m+1}}$. 
The first order Reed-Muller code ${\rm RM}(1,m)$ is the image of the mapping 
\[
\begin{array}{ccccc}
\Phi_{\rm RM}&:&\vF_2^{m+1}&\longrightarrow &\vF_2^{2^m}\\
&&(v_0,\ldots, v_m)&\longmapsto&\left(v_0 + \sum_{j=1}^m v_jx_j\right)_{x\in \vF_2^m}.
\end{array}
\]
Define also a mapping
\[
\begin{array}{ccccc}
\Psi&:&\vF_{2^{m+1}}&\longrightarrow &{\rm RM}(1,m)\\
&&a&\longmapsto& \Phi_{\rm RM}(\Theta^{-1}(a)).
\end{array}
\]
For a Reed-Solomon code of length $N$ and dimension $K$, we take distinct elements $\beta_1,\ldots,\beta_N\in \vF_{2^{m+1}}$ and let
\[
C_{\rm RS} = \{(p(\beta_1),\ldots,p(\beta_N))\mid p(x)\in \vF_{2^{m+1}}[x], {\rm deg}(p(x)) < K \}.
\]
Then, the concatenated RS-RM code is defined as
\[
C_{\rm con} = \{(\Psi(s_1),\ldots,\Psi(s_N)) \in \vF_2^{N2^m}\mid (s_1,\ldots, s_N)\in C_{\rm RS}\},
\]
and has dimension $k = K(m+1)$.
\begin{theorem}\label{T-con}
Let $C_{\rm con}$ be the concatenated RS-RM code. Then $C_{\rm con}\star C_{\rm con} \subseteq \Big({\rm RM}(2,m)\Big)^N$. As a consequence
\[
{\rm dim}(C_{\rm con}\star C_{\rm con}) \leq N\cdot\left(1+m+{m\choose 2}\right) \sim N\frac{m^2}{2}.
\]
\end{theorem}
\begin{proof}
First, it is well-known that ${\rm RM}(1,m)\star {\rm RM}(1,m) = {\rm RM}(2,m)$, and the latter has dimension $1+m+{m\choose 2}$.
For two codewords $c_s =(\Psi(s_1),\ldots,\Psi(s_N)), c_t=(\Psi(t_1),\ldots,\Psi(t_N))\in C_{\rm con}$, since $\Psi(s_j)\star\Psi(t_j)\in {\rm RM}(1,m)\star {\rm RM}(1,m) = {\rm RM}(2,m)$ for all $j$, we have that $c_s\star c_t \in \Big({\rm RM}(2,m)\Big)^N$ and the first statement follows.
The last statement is immediate.
\end{proof}

\begin{remark}
If we fix $N = 2^m, K = \lfloor2^m/(m+1)\rfloor$, we have that the bound of Theorem~\ref{T-con} is vanishingly smaller than the universal bound $k(k+1)/2 \sim K^2m^2/2$.
\end{remark}

\section{DECODING BINARY SHADOW CODES}
Binary shadow codes are constructed by evaluating the quadratic residue character $\chi$ on a set of low-degree polynomials over a finite field $\mathbb{F}_q$. Consequently, decoding a binary shadow code from channel noise is mathematically equivalent to recovering an unknown, squarefree message polynomial $g(X)$ from noisy evaluations of the composite function $\chi(g(x))$.

Let $q$ be an odd prime power. Let $g(X) \in \mathbb{F}_q[X]$ be a monic squarefree polynomial. In this section, we present an efficient algorithm developed by Swastik Kopparty \cite{kopparty2026recovering} to recover $g(X)$ given access to a noisy version $v$ of $\chi \circ g$.

\begin{algorithm}[H]
\caption{Kopparty's Decoding Algorithm}
\label{algo:algorithm_a}
\begin{algorithmic}[1]
\Statex \textbf{Parameters:} degree $d\le O(\epsilon\sqrt{q})$ error-bound $e\le(\frac{1}{8}-\epsilon)q.$
\Statex \textbf{Input} $v:\mathbb{F}_{q}\rightarrow\{0,\pm1\}$

\State Set
    \begin{itemize}
        \item $M=\frac{16}{\epsilon}d$
        \item $c=\frac{M}{2}$
        \item $h=2\cdot e.$
        \item $D=d\cdot((q-1)/2+M)+cq=(1+O(\epsilon))\cdot\frac{1}{2}\cdot M\cdot q,$
        \item $u=h+dM=2e+O(\frac{d^{2}}{\epsilon}).$
    \end{itemize}
\State Solve an $\mathbb{F}_{q}$ system of linear equations to find polynomials $F(X),U_{0}(X),...,U_{M-1}(X)\in\mathbb{F}_{q}[X]$, not all zero, where:
    \begin{itemize}
        \item $\deg(F)\le D$
        \item for each $l$ $,\deg(U_{l})\le u$
        \item For all $\alpha\in\mathbb{F}_{q}$, $0\le l<M;$
        \[
        F^{[l]}(\alpha)=v(\alpha)\cdot U_{l}(\alpha).
        \]
    \end{itemize}
\State Factor $F(X)$ into irreducible factors:
    \[
    F(X)=\lambda\prod_{j}H_{j}^{\mu_{j}}(X)
    \]
    where the $H_{j}(X)\in\mathbb{F}_{q}[X]$ are distinct and monic, and $\lambda\in\mathbb{F}_{q}^{*}.$
\State Define
    \[
    J=\left\{j \text{ such that } \mu_{j}\in\left[\frac{3}{8}q,\frac{7}{8}q\right] \pmod q\right\}.
    \]
\State Set
    \[
    f(X)=\prod_{j\in J}H_{j}(X).
    \]
\State Return $f(X)$
\end{algorithmic}
\end{algorithm}

\subsection*{Acknowledgements}
This work has been supported by the National Science Foundation, grant numbers 2127742, 2338424. We would like to thank Daniele Bartoli, Alessandro Neri, and Paolo Santini, for sharing their useful comments on the preliminary versions of this manuscript.
We would like to thank Will Sawin who provided us with an alternative proof of the results in Section \ref{sec:thmcharsum}.
We would like to thank Lukas Koelsch for helpful discussions on Theorem~\ref{T-square}.

\nocite{*}
\bibliographystyle{IEEEtran}
\bibliography{biblio}

\begin{thebibliography}{10}
\providecommand{\url}[1]{#1}
\csname url@samestyle\endcsname
\providecommand{\newblock}{\relax}
\providecommand{\bibinfo}[2]{#2}
\providecommand{\BIBentrySTDinterwordspacing}{\spaceskip=0pt\relax}
\providecommand{\BIBentryALTinterwordstretchfactor}{4}
\providecommand{\BIBentryALTinterwordspacing}{\spaceskip=\fontdimen2\font plus
\BIBentryALTinterwordstretchfactor\fontdimen3\font minus
  \fontdimen4\font\relax}
\providecommand{\BIBforeignlanguage}[2]{{%
\expandafter\ifx\csname l@#1\endcsname\relax
\typeout{** WARNING: IEEEtran.bst: No hyphenation pattern has been}%
\typeout{** loaded for the language `#1'. Using the pattern for}%
\typeout{** the default language instead.}%
\else
\language=\csname l@#1\endcsname
\fi
#2}}
\providecommand{\BIBdecl}{\relax}
\BIBdecl

\bibitem{KT2024}
\BIBentryALTinterwordspacing
F.~R. Kschischang and A.~Tasbihi, ``On binary shadow codes,'' 2024, preprint:
  \url{https://arxiv.org/abs/2408.09287}. [Online]. Available:
  \url{https://arxiv.org/abs/2408.09287}
\BIBentrySTDinterwordspacing

\bibitem{barg2006spectral}
\BIBentryALTinterwordspacing
A.~M. Barg and D.~Y. Nogin, ``A spectral approach to linear programming bounds
  for codes,'' \emph{Problemy Peredachi Informatsii}, vol.~42, no.~2, pp.
  12--25, 2006. [Online]. Available:
  \url{https://doi.org/10.1134/S0032946006020025}
\BIBentrySTDinterwordspacing

\bibitem{delsarte1975alternating}
\BIBentryALTinterwordspacing
P.~Delsarte and J.-M. Goethals, ``Alternating bilinear forms over {$GF(q)$},''
  \emph{J. Combin. Theory Ser. A}, vol.~19, pp. 26--50, 1975. [Online].
  Available: \url{https://doi.org/10.1016/0097-3165(75)90090-4}
\BIBentrySTDinterwordspacing

\bibitem{jiang2004asymptotic}
\BIBentryALTinterwordspacing
T.~Jiang and A.~Vardy, ``Asymptotic improvement of the {G}ilbert-{V}arshamov
  bound on the size of binary codes,'' \emph{IEEE Trans. Inform. Theory},
  vol.~50, no.~8, pp. 1655--1664, 2004. [Online]. Available:
  \url{https://doi.org/10.1109/TIT.2004.831751}
\BIBentrySTDinterwordspacing

\bibitem{macwilliams1977theory}
F.~J. MacWilliams and N.~J.~A. Sloane, \emph{The theory of error-correcting
  codes}.\hskip 1em plus 0.5em minus 0.4em\relax Elsevier, 1977, vol.~16.

\bibitem{pang2023new}
J.~C.~J. Pang, H.~Mahdavifar, and S.~S. Pradhan, ``New bounds on the size of
  binary codes with large minimum distance,'' \emph{IEEE Journal on Selected
  Areas in Information Theory}, vol.~4, pp. 219--231, 2023.

\bibitem{sidelnikovmutual}
V.~M. Sidelnikov, ``The mutual correlation of sequences,'' \emph{Dokl. Akad.
  Nauk SSSR}, vol. 196, pp. 531--534, 1971.

\bibitem{hergert1990delsarte}
F.~B. Hergert, ``On the {D}elsarte-{G}oethals codes and their formal duals,''
  \emph{Discrete Math.}, vol.~83, no. 2-3, pp. 249--263, 1990.

\bibitem{hazewinkel2011encyclopaedia}
M.~Hazewinkel, \emph{Encyclopaedia of Mathematics: Supplement 3}.\hskip 1em
  plus 0.5em minus 0.4em\relax Springer, 2011.

\bibitem{leducq2012new}
\BIBentryALTinterwordspacing
E.~Leducq, ``A new proof of {D}elsarte, {G}oethals and {M}ac {W}illiams theorem
  on minimal weight codewords of generalized {R}eed-{M}uller codes,''
  \emph{Finite Fields Appl.}, vol.~18, no.~3, pp. 581--586, 2012. [Online].
  Available: \url{https://doi.org/10.1016/j.ffa.2011.12.003}
\BIBentrySTDinterwordspacing

\bibitem{hammons1994z}
\BIBentryALTinterwordspacing
A.~R.~J. Hammons, P.~V. Kumar, A.~R. Calderbank, N.~J.~A. Sloane, and
  P.~Sol\'{e}, ``The {${\bf Z}_4$}-linearity of {K}erdock, {P}reparata,
  {G}oethals, and related codes,'' \emph{IEEE Trans. Inform. Theory}, vol.~40,
  no.~2, pp. 301--319, 1994. [Online]. Available:
  \url{https://doi.org/10.1109/18.312154}
\BIBentrySTDinterwordspacing

\bibitem{BargRestricted2015}
\BIBentryALTinterwordspacing
A.~Barg, A.~Mazumdar, and R.~Wang, ``Restricted isometry property of random
  subdictionaries,'' \emph{IEEE Trans. Inform. Theory}, vol.~61, no.~8, pp.
  4440--4450, 2015. [Online]. Available:
  \url{https://doi.org/10.1109/TIT.2015.2448658}
\BIBentrySTDinterwordspacing

\bibitem{A_Cauvreuretal}
A.~Couvreur, V.~Gaborit, Philippe Gauthier-Umaña, A.~Otmani, and J.-P.
  Tillich, ``Distinguisher-based attacks on public-key cryptosystems using
  reed–solomon codes,'' \emph{Designs, Codes and Cryptography}, vol.~33, pp.
  641--666, 2014.

\bibitem{7942048}
A.~Couvreur, I.~Márquez-Corbella, and R.~Pellikaan, ``Cryptanalysis of
  mceliece cryptosystem based on algebraic geometry codes and their subcodes,''
  \emph{IEEE Transactions on Information Theory}, vol.~63, no.~8, pp.
  5404--5418, 2017.

\bibitem{bib:stichtenoth2009algebraic}
H.~Stichtenoth, \emph{Algebraic function fields and codes}, 2nd~ed., ser.
  Graduate Texts in Mathematics.\hskip 1em plus 0.5em minus 0.4em\relax
  Springer-Verlag, Berlin, 2009, vol. 254.

\bibitem{huxley1972difference}
\BIBentryALTinterwordspacing
M.~N. Huxley, ``On the difference between consecutive primes,'' \emph{Invent.
  Math.}, vol.~15, pp. 164--170, 1972. [Online]. Available:
  \url{https://doi.org/10.1007/BF01418933}
\BIBentrySTDinterwordspacing

\bibitem{venkatesan}
V.~Guruswami, ``Introduction to {C}oding {T}heory, {Spring 2010}, online
  lecture notes,''
  \emph{\url{https://www.cs.cmu.edu/~venkatg/teaching/codingtheory/}}, 2010.

\bibitem{Zemor-square}
I.~Cascudo, R.~Cramer, D.~Mirandola, and G.~Zémor, ``Squares of random linear
  codes,'' \emph{IEEE Transactions on Information Theory}, vol.~61, no.~3, pp.
  1159--1173, 2015.

\bibitem{Lidl_Niederreiter_1996}
R.~Lidl and H.~Niederreiter, \emph{Finite Fields}, 2nd~ed., ser. Encyclopedia
  of Mathematics and its Applications.\hskip 1em plus 0.5em minus 0.4em\relax
  Cambridge University Press, 1996.

\bibitem{kopparty2026recovering}
S.~Kopparty, ``Recovering polynomials over finite fields from noisy character
  values,'' \emph{arXiv preprint arXiv:2601.07137}, 2026.

\bibitem{BHP2001}
\BIBentryALTinterwordspacing
R.~C. Baker, G.~Harman, and J.~Pintz, ``The difference between consecutive
  primes {II},'' \emph{Proc. London Math. Soc. (3)}, vol.~83, no.~3, pp.
  532--562, 2001. [Online]. Available:
  \url{https://doi.org/10.1112/plms/83.3.532}
\BIBentrySTDinterwordspacing

\bibitem{levenshtein1998universal}
V.~I. Levenshtein, ``Universal bounds for codes and designs,'' in
  \emph{Handbook of coding theory, {V}ol. {I}, {II}}.\hskip 1em plus 0.5em
  minus 0.4em\relax North-Holland, Amsterdam, 1998, pp. 499--648.

\bibitem{levensoviet}
V.~I. Levenstein, ``Bounds on the maximal cardinality of a code with bounded
  modules of the inner product,'' \emph{Soviet Math. Dokl.}, vol.~25, no.~2,
  pp. 526--531, 1982.

\bibitem{MR465509}
F.~J. MacWilliams and N.~J.~A. Sloane, \emph{The theory of error-correcting
  codes. {I}}, ser. North-Holland Mathematical Library, Vol. 16.\hskip 1em plus
  0.5em minus 0.4em\relax North-Holland Publishing Co., Amsterdam-New
  York-Oxford, 1977.

\bibitem{MR465510}
------, \emph{The theory of error-correcting codes. {II}}, ser. North-Holland
  Mathematical Library, Vol. 16.\hskip 1em plus 0.5em minus 0.4em\relax
  North-Holland Publishing Co., Amsterdam-New York-Oxford, 1977.

\end{thebibliography}

\end{document}